\newcommand{\Rmnum}[1]{\expandafter\@slowromancap\romannumeral #1@}
\newtheorem{thm}{Theorem}
\newtheorem{defn}{Definition}
\newtheorem{lem}{Lemma}
\newtheorem{claim}{Claim}
\newtheorem{remark}{Remark}
\newcommand{\namedref}[2]{\hyperref[#2]{#1~\ref*{#2}}}
\newcommand{\Sectionref}[1]{\namedref{Section}{sec:#1}}
\newcommand{\Subsectionref}[1]{\namedref{Subsection}{subsec:#1}}
\newcommand{\Appendixref}[1]{\namedref{Appendix}{app:#1}}
\newcommand{\Theoremref}[1]{\namedref{Theorem}{thm:#1}}
\newcommand{\Lemmaref}[1]{\namedref{Lemma}{lem:#1}}
\newcommand{\Claimref}[1]{\namedref{Claim}{claim:#1}}
\newcommand{\Figureref}[1]{\namedref{Figure}{fig:#1}}
\newcommand{\Pageref}[1]{\hyperref[#1]{page~\pageref*{#1}}}
\definecolor{darkred}{rgb}{0.5, 0, 0} 
\definecolor{darkgreen}{rgb}{0, 0.5, 0} 
\definecolor{darkblue}{rgb}{0,0,0.5} 
\newcommand{\B}{\ensuremath{\mathcal{B}}\xspace}
\newcommand{\X}{\ensuremath{\mathcal{X}}\xspace}
\newcommand{\Y}{\ensuremath{\mathcal{Y}}\xspace}
\newcommand{\Z}{\ensuremath{\mathcal{Z}}\xspace}
\newcommand{\U}{\ensuremath{\mathcal{U}}\xspace}
\newcommand{\V}{\ensuremath{\mathcal{V}}\xspace}
\newcommand{\C}{\ensuremath{\mathcal{C}}\xspace}
\renewcommand{\paragraph}[1]{\smallskip\noindent{\bf #1}~}
\begin{document}

\title{Secure Computation of Randomized Functions: Further Results}
\author{\IEEEauthorblockN{Deepesh Data and Vinod M. Prabhakaran} \\
\IEEEauthorblockA{School of Technology \& Computer Science\\
Tata Institute of Fundamental Research,
Mumbai, India\\
Email: \{deepeshd,vinodmp\}@tifr.res.in}} 
\maketitle
\pagestyle{plain}

\begin{abstract}
We consider secure computation of randomized functions between two users, where both the users (Alice and Bob) have inputs, Alice sends a message to Bob over a rate-limited, noise-free link, and then Bob produces the output. We study two cases: (i) when privacy condition is required only against Bob, who tries to learn more about Alice's input from the message than what can be inferred by his own input and output, and (ii) when there is no privacy requirement. For both the problems, we give single-letter expressions for the optimal rates. For the first problem, we also explicitly characterize securely computable randomized functions when input has full support, which leads to a much simpler expression for the optimal rate. Recently, Data (ISIT 2016) studied the remaining two cases (first, when privacy conditions are against both the users; and second, when privacy condition is only against Alice) and obtained single-letter expressions for optimal rates in both the scenarios.
\end{abstract}

\section{Introduction}\label{introduction}
Two-user secure computation allows mutually distrusting users to jointly perform computation of their private data interactively, in such a way that no individual learns anything beyond the function value. 
The study of secure computation was initiated in \cite{ShamirRiAd79}, \cite{Rabin81}, \cite{Yao82,Yao86}, etc., under computational assumptions, and culminated in a surprising result that, every function (both deterministic and randomized) can be securely computed.
However, it turns out that not all two-user functions are computable with information-theoretic security. A combinatorial characterization of securely computable two-user {\em deterministic} functions (with perfect security), along-with a generic round-optimal secure protocol, was given in \cite{Kushilevitz89}, \cite{Beaver89b}. Maji et al. \cite{MajiPrRo09}, among other things, characterized deterministic functions that have (statistically) secure protocols.
An alternative characterization (for perfect security) was given by Narayan et al. \cite{NarayanTyWa15} using common randomness generated by deterministic secure protocols. 
A characterization of securely sampleable joint distributions (in terms of Wyner commom information \cite{Wyner75}) was given in \cite{WangIs11}, and of securely computable functions, when communication is public and privacy of the function value is against an eavesdropper, was given in \cite{TyagiNaGu11}.

Interestingly, characterization for securely computable {\em randomized} functions is still not known. Kilian \cite{Kilian00}, among many other things, gave a characterization when both the users have inputs and only Bob produces the output. Data \cite{Data16} studied two variations of the same problem (one with privacy against both the users, and other with privacy only against Alice) with a probability distribution on inputs and gave rate-optimal codes for both the problems in perfect security as well as in asymptotic security settings. In this paper we study the remaining two cases (one with privacy only against Bob, and the other with no privacy), and give rate-optimal codes for both the problems in asymptotic security setting, when only one-sided communication from Alice to Bob is allowed. Proofs in this paper differ in significant ways from that of \cite{Data16}. We also give explicit characterization of securely computable randomized functions (where inputs have full support), for the case when privacy is only against Bob, which leads to a much simpler expression for the optimal rate. Note that Ortiksky and Roche \cite{OrlitskyRoche} studied the same problem for {\em deterministic} function computation with no privacy and gave a rate-optimal code, but their proofs do not seem to generalize to the randomized function setting that we consider in this paper.

\begin{figure}
\centering
\begin{tikzpicture}[>=stealth']
\draw [fill=lightgray, thick] (0,0) rectangle node {\Large A} +(1,1); \draw [->,thick] (-0.65,0.5) -- (0,0.5); \node at (-0.9,0.6) {$X^n$};
\draw [fill=lightgray, thick] (4,0) rectangle node {\Large B} +(1,1); \draw [<-,thick] (5,0.5) -- (5.65,0.5); \node at (6,0.55) {$Y^n$};
\draw [->,thick] (4.5,0) -- (4.5,-0.5); \node at (4.5,-0.8) {$\hat{Z}^n$}; \node [right,scale=0.8] at (5,-0.5) {$Z_i\sim p_{Z|XY}$};
\draw [->,thick] (1,0.5) -- (4,0.5); \node at (2.5,0.75) {$M$};
\end{tikzpicture}
\caption{
A secure computation problem is described by $(p_{XY},p_{Z|XY})$: Alice and Bob get $X^n$ and $Y^n$, respectively, as their inputs, where $(X_i,Y_i)$'s are i.i.d. according to $p_{XY}$, and Bob wants to securely compute $Z^n$ according to $p_{Z^n|X^nY^n}(z^n|x^n,y^n)=\Pi_{i=1}^np_{Z|XY}(z_i|x_i,y_i)$. Users have access to private randomness. In any secure code, Alice sends a message $M$ over a rate-limited, noise-free link to Bob, and based on $(M,Y^n)$ Bob produces $\hat{Z}^n$ as his output, such that the following conditions are satisfied when block-length tends to infinity: (i) the $L_1$-distance between the induced distribution $p_{X^nY^n\hat{Z}^n}$ and the desired distribution $p_{X^nY^nZ^n}$ goes to zero, and (ii) the average amount of additional information Bob learns about $X^n$ from the message $M$ goes to zero. We also study this problem when there is no privacy, i.e., purely from the (randomized) function computation point of view. There are no external adversaries/eavesdroppers.}
\label{fig:problem}
\end{figure}
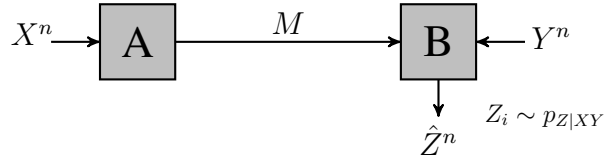
The techniques used in our achievability proofs were developed by Yassaee, Aref, and Gohari \cite{YassaeeArGo14}, who, along with several new results, also gave alternative and, arguably, simpler proofs for many well-studied problems in network information theory, including channel synthesis problems. Cuff \cite{Cuff13} used a different tool (soft-covering lemma) to synthesize a memoryless channel with the help of communication and common randomness. Our problem can be viewed as a generalization of channel synthesis problem with privacy. Both \cite{YassaeeArGo14} and \cite{Cuff13} studied several security related problems (secret key generation, wiretap channels, etc.), but all of them guarantee privacy against an eavesdropper. The focus of this paper is on secure computation (where there is no eavesdropper and we want privacy against the users themselves), which is fundamentally different from secure communication or secret key agreement (where we want privacy against an eavesdropper).

This paper is organized as follows: in \Sectionref{problem-defn} we formally define two problems (one with privacy, and the other one without privacy) and state our main theorems for the optimal rate-expressions. 
In \Sectionref{proof_general-case} we prove the first theorem (one with privacy). We also explicitly characterize securely computable randomized functions when inputs have full support, which leads to a more explicit expression for the optimal rate. Using this characterization we give a direct achievability using the Slepian-Wolf coding scheme, along with a much simpler argument of its correctness and privacy. 

\paragraph{Notation.} 
For $n\in\mathbb{N}$, we write $[n]$ to denote the set $\{1,2,\hdots,n\}$.
We write capital letters $P, Q,$ etc., to denote random p.m.f.'s and small letters $p,q,$ etc., to denote non-random p.m.f.'s.
The total variation distance between two p.m.f.'s $p_U$ and $q_U$ on the same alphabet $\U$ is defined by $\|p_U-q_U\|_1:=  \frac{1}{2}\sum_{u\in\U}|p_U(u)-q_U(u)|$. For any two sequences of random p.m.f.'s $(P_{U^{(n)}})_{n\in\mathbb{N}}$ and $(Q_{U^{(n)}})_{n\in\mathbb{N}}$ (where for every $n\in\mathbb{N}$, $U^{(n)}$ takes values in $\U^{(n)}$, which is an arbitrary set, different from $\U^n$ -- the $n$-fold cartesian product of $\U$), we write $P_{U^{(n)}}\approx Q_{U^{(n)}}$ if $\lim_{n\to\infty}\mathbb{E}[\|P_{U^{(n)}}-Q_{U^{(n)}}\|_1]\to0$. Similarly, for any two sequences of (non-random) p.m.f.'s $(p_{U^{(n)}})_{n\in\mathbb{N}}$ and $(q_{U^{(n)}})_{n\in\mathbb{N}}$, we write $p_{U^{(n)}}\approx q_{U^{(n)}}$ if $\lim_{n\to\infty}\|p_{U^{(n)}}-q_{U^{(n)}}\|_1\to0$.

\section{Problem Definition and Statements of Results}\label{sec:problem-defn}
We study randomized function computation in the two user setting; see \Figureref{problem}. Problem is specified by a pair $(p_{XY},p_{Z|XY})$, where $X,Y,Z$ take values in finite alphabets $\X,\Y,\Z$, respectively. One user (Alice) has an input sequence $X^n$, other user (Bob) has an input sequence $Y^n$, where $(X_i,Y_i)$'s are i.i.d. according to $p_{XY}$, and Bob wants to compute an output sequence $Z^n$, which should be distributed according to $p_{Z^n|X^nY^n}(z^n|x^n,y^n):= \Pi_{i=1}^np_{Z|XY}(z_i|x_i,y_i)$. We relax the correctness condition and allow a small error in function computation: Bob may output $\hat{Z}^n$ such that $\|p_{X^nY^n\hat{Z}^n}-p_{X^nY^nZ^n}\|_1\to0$ as $n\to\infty$.
We allow only one sided communication, in which Alice sends a single message to Bob, and then Bob produces an output.
Both the users have access to private randomness only.
A $(n,2^{nR})$-code $\C_n$ for computing $(p_{XY},p_{Z|XY})$ is defined as a pair of stochastic maps $(p_{M|X^n},p_{\hat{Z}^n|MY^n})$, where alphabet of $M$ is $\{1,2,\hdots,2^{nR}\}$ (here $R$ is called the rate of the code $\C_n$). We call $p_{M|X^n}$ the encoder and $p_{\hat{Z}^n|MY^n}$ the decoder.
Upon observing $X^n$, Alice maps $X^n$ to an index $M$ according to $p_{M|X^n}$ and sends $M$ to Bob. Now Bob outputs $\hat{Z}^n$ according to $p_{\hat{Z}^n|M,Y^n}$.
Note that the code $\C_n$ induces the following joint distribution: $p_{X^nY^nM\hat{Z}^n}= p_{X^nY^n}p_{M|X^n}p_{\hat{Z}^n|MY^n}$.
We study this problem in two different settings -- one with privacy, and the other without privacy. 

\subsection{Randomized Function Computation With Privacy}\label{subsec:privacy-bob}
We say that a sequence of codes $(\C_n)_{n\in\mathbb{N}}$ for computing $(p_{XY},p_{Z|XY})$ is secure if, (i) the $L_1$-distance between the induced $p_{X^nY^n\hat{Z}^n}$ and the desired $p_{X^nY^nZ^n}$ goes to zero as block-length tends to infinity, and (ii) the average amount of additional information Bob learns about $X^n$ from the message $M$ goes to zero as block-length tends to infinity. 
\begin{defn}\label{defn:as_WithPrivacy}
For a secure computation problem $(p_{XY},p_{Z|XY})$, we say that a rate $R$ is {\em achievable} if there exists a sequence of codes $(\C_n)_{n\in\mathbb{N}}$ with rate $R$, such that for every $\epsilon>0$, there exists a large enough $n$, such that
\begin{align}
\|p_{X^nY^n\hat{Z}^n}-p_{X^nY^nZ^n}\|_1 &\leq \epsilon, \label{eq:correctness_S} \\
I(M;X^n|Y^n,\hat{Z}^n) &\leq n\epsilon. \label{eq:privacy-bob}
\end{align}
We define the optimal rate $R_S$ as the infimum of all the achievable rates. 
If the set of achievable rates is empty, we say that $R_S=\infty$; and $(p_{XY},p_{Z|XY})$ is said to be computable with asymptotic security, if $R_S$ is finite.
\end{defn}
If there exists a code with $n=1$ such that \eqref{eq:correctness_S} and \eqref{eq:privacy-bob} are satisfied with $\epsilon=0$, then we say that $(p_{XY},p_{Z|XY})$ is computable with perfect security, i.e., $(p_{XY},p_{Z|XY})$ is computable with perfect security, if there exists $p(u,x,y,z)=p_{XYZ}(x,y,z)p(u|x,y,z)$ that satisfies the following Markov chains: 
\begin{align}
& U-X-Y, \label{eq:ps_correctness} \\
& Z-(U,Y)-X, \label{eq:ps_producing-output} \\
& U-(Y,Z)-X. \label{eq:ps_privacy}
\end{align}
\begin{lem}\label{lem:as-iff-ps}
$(p_{XY},p_{Z|XY})$ is computable with asymptotic security if and only if it is computable with perfect security. 
\end{lem}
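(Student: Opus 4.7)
The $(\Leftarrow)$ direction is immediate: applying a perfect-security $n=1$ scheme independently in each block makes the induced joint equal to $p_{X^nY^nZ^n}$ exactly and contributes zero privacy leakage, so \eqref{eq:correctness_S} and \eqref{eq:privacy-bob} hold with $\epsilon=0$ for every $n$.

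For the $(\Rightarrow)$ direction, the plan is to single-letterize an asymptotically secure code by time-sharing and then pass to the limit via compactness. Given a code $\C_n$ satisfying the asymptotic conditions with parameter $\epsilon$, I let $T$ be uniform on $[n]$ independent of $(X^n,Y^n,M,\hat Z^n)$ and set $U:=(M,Y^{-T},T)$, $X:=X_T$, $Y:=Y_T$, $Z:=\hat Z_T$, where $Y^{-T}$ denotes $Y^n$ with its $T$-th coordinate removed. The first two Markov chains hold exactly: (i)~$U-X-Y$ because, given $X_t$, the variable $Y_t$ is drawn from $p_{Y|X}(\cdot|X_t)$ independently of everything else in the joint, in particular of $(M,Y^{-t})$; (ii)~$Z-(U,Y)-X$ because $(U,Y)\supseteq(M,Y^n)$ and the decoder structure gives $\hat Z_t\perp X^n\mid(M,Y^n)$. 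Correctness plus marginalization gives $\|p_{X_tY_t\hat Z_t}-p_{XYZ}\|_1\le\epsilon$ for each $t$, so the induced $(X,Y,Z)$-marginal is within $\epsilon$ of $p_{XYZ}$ in $L_1$.

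The key step is to show $I(U;X\mid Y,Z)=O(\epsilon)$. Using the chain rule, the identity $I(A;B\mid C,D)-I(A;B\mid C)=I(A;D\mid B,C)-I(A;D\mid C)$, and $I(Y^{-t};X_t\mid Y_t)=0$ (i.i.d.\ pairs),
\[
I(U;X\mid Y,Z)\le I(T;X_T\mid Y_T,\hat Z_T)+\tfrac1n\sum_{t=1}^{n}\bigl[I(\hat Z_t;Y^{-t}\mid X_t,Y_t)+I(M;X_t\mid Y^n,\hat Z_t)\bigr].
\]
For the first sum in brackets, subadditivity gives $\sum_t I(\hat Z_t;Y^{-t}\mid X_t,Y_t)\le\sum_t H(\hat Z_t\mid X_t,Y_t)-H(\hat Z^n\mid X^n,Y^n)$; per-letter bounded-alphabet continuity yields $\sum_t H(\hat Z_t\mid X_t,Y_t)\le nH(Z\mid X,Y)+O(n\epsilon)$, and full-joint continuity of $\hat Z^n\leftrightarrow Z^n$ gives $H(\hat Z^n\mid X^n,Y^n)\ge nH(Z\mid X,Y)-O(n\epsilon)$; although both continuity errors are $O(n\epsilon)$, dividing by~$n$ produces $O(\epsilon)$. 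The second sum in brackets is split as $I(M;X_t\mid Y^n,\hat Z_t)\le I(\hat Z^{-t};X_t\mid Y^n,\hat Z_t)+I(M;X_t\mid Y^n,\hat Z^n)$; the first part is bounded analogously, while $\sum_t I(M;X_t\mid Y^n,\hat Z^n)\le I(M;X^n\mid Y^n,\hat Z^n)+O(n\epsilon)\le n\epsilon+O(n\epsilon)$ by subadditivity of entropy and \eqref{eq:privacy-bob}. Finally, $I(T;X_T\mid Y_T,\hat Z_T)=O(\epsilon)$ because the marginal of $(X_T,Y_T,\hat Z_T)$ under each fixed $t$ is $\epsilon$-close to $p_{XYZ}$.

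To conclude, for a sequence of codes with $\epsilon_k\to 0$ the construction produces joints $q^{(k)}(u,x,y,z)$ satisfying the first two Markov chains exactly and the third with slack $O(\epsilon_k)$, and with $(X,Y,Z)$-marginal $O(\epsilon_k)$-close to~$p_{XYZ}$. A Caratheodory-style alphabet reduction replaces $U$ by an auxiliary on an alphabet of size depending only on $|\X|,|\Y|,|\Z|$ while preserving the two exact Markov chains, the $(X,Y,Z)$-marginal, and the value of $I(U;X\mid Y,Z)$. Compactness on this bounded alphabet then yields a convergent subsequence, and the limit $q^*$ satisfies all three Markov chains exactly (they are closed conditions) and has marginal~$p_{XYZ}$, giving the required perfect-security scheme. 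The main obstacle is the single-letterization in the previous paragraph: several continuity errors naturally look like $O(n\epsilon)$ because $Y^n$ and $\hat Z^n$ live on alphabets of size exponential in~$n$, and they are tamed into $O(\epsilon)$ only after the $\tfrac1n$ averaging over~$t$.
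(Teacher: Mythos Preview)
Your proof is correct and follows the same overall strategy as the paper. The paper itself does not spell out a proof of this lemma---it simply points to \cite{Data16}---but the machinery it develops in the converse of \Theoremref{as_rate} (Appendix~\ref{app:single-letterization}) is exactly your approach: single-letterize the code via a time-sharing variable, show that the resulting auxiliary satisfies the first two Markov chains exactly and the third approximately with slack $\delta(\epsilon)\to 0$, and then invoke continuity/compactness (the paper's ``right-continuity of $R_\epsilon$ at $\epsilon=0$'' argument) together with a Carath\'eodory alphabet reduction to pass to a limiting $p_{U|XYZ}$ satisfying all three Markov chains exactly.

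Two minor remarks. First, your auxiliary $U=(M,Y^{-T},T)$ is slightly leaner than the paper's $U_i=(X^{i-1},M,Y^{-i})$; both work for the present lemma, and the paper's choice is dictated only by the rate lower bound, not by feasibility. Second, your bounding of $\sum_t I(M;X_t\mid Y^n,\hat Z^n)$ is not literally ``subadditivity plus \eqref{eq:privacy-bob}'' but rather per-letter continuity on $\sum_t H(X_t\mid Y_t,\hat Z_t)$, full-joint continuity on $H(X^n\mid Y^n,\hat Z^n)$, subadditivity for $\sum_t H(X_t\mid M,Y^n,\hat Z^n)\ge H(X^n\mid M,Y^n,\hat Z^n)$, and then \eqref{eq:privacy-bob}; your sketch compresses these steps but the arithmetic is sound.
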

\Lemmaref{as-iff-ps} is proved in \Appendixref{one-round}. Our main result is the following theorem.
\begin{thm}\label{thm:as_rate}
Suppose $(p_{XY},p_{Z|XY})$ is computable with asymptotic security. Then
\[R_S\quad  = \quad\displaystyle \min_{\substack{p_{U|XYZ}: \\ U-X-Y \\ Z-(U,Y)-X \\ U-(Y,Z)-X \\}} I(Z;U|Y),\]
where cardinality of $\U$ satisfies $|\U|\leq|\X|\cdot|\Y|\cdot|\Z|+2$.
\end{thm}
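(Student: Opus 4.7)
The plan is to split the proof into achievability, converse, and the cardinality bound.

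For achievability, I would apply the output statistics of random binning (OSRB) framework of Yassaee, Aref, and Gohari \cite{YassaeeArGo14}. Fix any $p_{U|XYZ}$ achieving the infimum and satisfying the three Markov chains, and consider the i.i.d. source $(X^n,Y^n,U^n,Z^n)\sim p_{XYUZ}^n$. Random-bin $U^n$ into a message index $M\in[2^{nR}]$ and a key index $F$ of appropriate rate, tuning the rates so that: (a) $F$ is nearly uniform and independent of $X^n$ in the source distribution, so that by OSRB it can be replaced by Alice's private randomness; (b) $U^n$ is recoverable from $(M,F,Y^n)$ by Slepian--Wolf decoding at Bob. Bob's decoder then outputs $\hat Z^n$ by first recovering $\hat U^n$ and passing it through the local channel $p_{Z|UY}^n$, which by $Z-(U,Y)-X$ yields a joint $(X^n,Y^n,\hat Z^n)$ close to $p_{XYZ}^n$. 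For privacy, since $M$ is a function of $U^n$ and the source-level Markov chain $U^n-(Y^n,Z^n)-X^n$ (the tensorization of $U-(Y,Z)-X$) gives $I(M;X^n|Y^n,Z^n)=0$ exactly in the source distribution, the OSRB guarantee transfers this approximately to the code-induced distribution. The rate balance between (a) and (b), using the three Markov chains to eliminate $H(U|X)$ and $H(U|Y)$ terms in favor of $H(U|Y)-H(U|Y,Z)$, yields $R=I(U;Z|Y)+\epsilon$.

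For the converse, starting from any sequence of codes satisfying \eqref{eq:correctness_S} and \eqref{eq:privacy-bob}, I would bound
\[nR\ \ge\ H(M)\ \ge\ I(M;Z^n|Y^n),\]
and single-letterize. A natural candidate auxiliary is $U_i=(M,Y^{i-1},Z^{i-1})$ (possibly augmented with future coordinates according to a Csisz\'ar-sum step), combined with a uniform time-sharing variable $T$; setting $U=(U_T,T),\ X=X_T,\ Y=Y_T,\ Z=Z_T$ gives $R\ge I(U;Z|Y)-o(1)$. The three Markov chains must then be checked on the code-induced distribution: $U-X-Y$ from the i.i.d. source together with Alice's encoder depending only on $X^n$ and her private randomness; $Z-(U,Y)-X$ from Bob's decoder depending only on $(M,Y^n)$ and his private randomness; and $U-(Y,Z)-X$ from the privacy condition \eqref{eq:privacy-bob}, which holds only up to an $o(1)$ slack. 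A continuity/closedness argument on the set of $p_{U|XYZ}$ satisfying the Markov chains then produces a limiting $U$ for which the chains hold exactly.

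The cardinality bound $|\U|\le|\X||\Y||\Z|+2$ follows from the standard Carath\'eodory support lemma applied to the objective $I(Z;U|Y)$ together with the constraints that fix the marginal $p_{XYZ}$ (contributing $|\X||\Y||\Z|-1$ equalities) and encode the Markov-chain structure and the rate value itself.

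The main obstacle I expect is the privacy condition in the achievability argument: the OSRB approximation is natively a total-variation statement, while \eqref{eq:privacy-bob} is a conditional-mutual-information statement. Lifting $\|p_{MX^nY^nZ^n}-p_{MY^nZ^n}p_{X^n|Y^nZ^n}\|_1\to 0$ to $I(M;X^n|Y^n,Z^n)=o(n)$ requires the usual continuity-of-entropy argument but must be combined carefully with the Slepian--Wolf decoder's error event and with the replacement of the genie-aided $Z^n$ by Bob's reconstruction $\hat Z^n$; this bookkeeping is the delicate step.
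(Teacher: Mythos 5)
Your achievability outline is essentially the paper's, up to two imprecisions worth flagging. First, the shared index $F$ is \emph{not} "replaced by Alice's private randomness": both Alice (to draw $U^n\sim P(u^n|x^n,f)$) and Bob (to run the SW decoder) need the same $f$, so $F$ plays the role of shared randomness that is eliminated by \emph{conditioning on a fixed good instance} $f$ at the end — this is the last step of the OSRB argument, and it is why one must prove that conditioning on $f$ does not distort the induced statistics. Second, for that derandomization to preserve correctness, $F$ must be nearly independent of the full triple $(X^n,Y^n,Z^n)$, not merely of $X^n$; the rate constraint is $R'<H(U|X,Y,Z)$, and this is exactly what makes the rate balance $R_M>H(U|Y)-R'>H(U|Y)-H(U|X,Y,Z)=I(X,Z;U|Y)=I(Z;U|Y)$ (the last equality via $U-(Y,Z)-X$) come out to the claimed objective.

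The converse is where there is a genuine gap. Your candidate auxiliary $U_i=(M,Y^{i-1},\hat Z^{i-1})$ does \emph{not} satisfy $U_i-X_i-Y_i$: Bob's output $\hat Z^{i-1}$ is generated from $(M,Y^n)$, hence depends on $Y_i$ as well, so $I(U_i;Y_i|X_i)>0$ in general. The paper avoids this by choosing $U_i=(X^{i-1},M,Y^{i-1},Y_{i+1}^n)$, which contains no $\hat Z$ components. The trick that lets one \emph{drop} $\hat Z^{i-1}$ while still obtaining a valid lower bound is to start not from $I(M;\hat Z^n|Y^n)$ but from the larger quantity $I(M;X^n,\hat Z^n|Y^n)$: the chain-rule expansion produces the conditional entropy $H(X_i,\hat Z_i|X^{i-1},\hat Z^{i-1},M,Y^n)$, and removing $\hat Z^{i-1}$ from the conditioning only increases entropy, so the inequality goes the right way once $X^{i-1}$ is already present. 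This yields $R\ge I(X,\hat Z;U|Y)-o(1)$ with the three Markov chains holding exactly ($U-X-Y$, $\hat Z-(U,Y)-X$) or approximately ($U-(Y,\hat Z)-X$ from the privacy condition), after which a continuity argument at $\epsilon\to 0$ and the reduction $I(X,Z;U|Y)=I(Z;U|Y)$ complete the bound. Your version as written does not see why $X^{i-1}$ is needed, nor how to discard the problematic $\hat Z^{i-1}$ terms.
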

\Theoremref{as_rate} is proved in \Sectionref{proof_general-case}. If $p_{XY}$ has full support, then we obtain a more explicit expression for the optimal rate. For this we first explicitly characterize securely computable $(p_{XY},p_{Z|XY})$, and give a direct coding scheme along with a much simpler proof of its security without resorting to the achievability proof of \Theoremref{as_rate}. Details are in \Subsectionref{full-support}.

\subsection{Randomized Function Computation Without Privacy}\label{subsec:no-privacy}
\begin{defn}\label{defn:as_WithoutPrivacy}
For a randomized function computation problem $(p_{XY},p_{Z|XY})$, we say that a rate $R$ is {\em achievable} if there exists a sequence of codes $(\C_n)_{n\in\mathbb{N}}$ with rate $R$, such that for every $\epsilon>0$, there exists a large enough $n$, such that
\begin{align}
\|p_{X^nY^n\hat{Z}^n}-p_{X^nY^nZ^n}\|_1 &\leq \epsilon. \label{eq:correctness}
\end{align}
We define the optimal rate $R_{NS}$ as the infimum of all the achievable rates.
\end{defn}
\begin{thm}\label{thm:rate_no-privacy}
$R_{NS}\quad=\quad\displaystyle \min_{\substack{p_{U|XYZ}: \\ U-X-Y \\ Z-(U,Y)-X \\}} I(X,Z;U|Y),$
where cardinality of $\U$ satisfies $|\U|\leq|\X|\cdot|\Y|\cdot|\Z|+2$.
\end{thm}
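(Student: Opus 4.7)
\emph{Converse.} I start from
\[
nR \;\geq\; H(M) \;\geq\; H(M\mid Y^n) \;\geq\; I(M; X^n, \hat{Z}^n \mid Y^n),
\]
and identify the auxiliary $U_i := (M, Y^{n\setminus i})$ for each $i\in[n]$. The Markov chains are straightforward to check: $U_i - X_i - Y_i$ holds because $M$ is a stochastic function of $X^n$ alone and the pairs $(X_j,Y_j)$ are i.i.d., so conditional on $X_i$ both $M$ and $Y^{n\setminus i}$ are independent of $Y_i$; and $\hat{Z}_i - (U_i, Y_i) - X_i$ holds because $\hat{Z}^n$ is a function of $(M, Y^n)$ and Bob's private randomness, and $(U_i, Y_i) = (M, Y^n)$. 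Applying the chain rule to $I(M; X^n, \hat{Z}^n\mid Y^n)$ and dropping $(X^{i-1}, \hat{Z}^{i-1})$ from the conditioning of the subtracted entropy term (which only increases it), and using \eqref{eq:correctness} together with the standard continuity of conditional entropy in total variation to replace induced conditional entropies by their target i.i.d.\ values, one obtains
\[
nR \;\geq\; \sum_{i=1}^n I(U_i; X_i, \hat{Z}_i \mid Y_i) \;-\; n\epsilon_n.
\]
An independent time-sharing variable $Q\sim\mathrm{Unif}[n]$ together with $U:=(U_Q,Q)$, $X:=X_Q$, $Y:=Y_Q$, $\hat{Z}:=\hat{Z}_Q$ then yields $R \geq I(U; X, \hat{Z}\mid Y) - \epsilon_n$. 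A standard compactness argument combined with the Fenchel-Eggleston-Carath\'eodory cardinality bound $|\U|\leq |\X||\Y||\Z|+2$ lets one pass to the limit and extract a $U$ satisfying the Markov chains exactly with respect to $p_{XYZ}$.

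\emph{Achievability.} For the direct part I would use the output-statistics-of-random-binning (OSRB) framework of Yassaee-Aref-Gohari \cite{YassaeeArGo14}, in the same spirit as the authors' proof of \Theoremref{as_rate}. Fix $p_{U|XYZ}$ with the two Markov chains and with $I(X,Z;U\mid Y) < R$, and work in the fictitious world where $(X^n, Y^n, U^n, Z^n)\sim \prod p_{XYUZ}$ i.i.d. Uniformly bin the $U^n$-sequences into two independent indices, $M\in[2^{nR}]$ (the actual message) and $F\in[2^{nR_F}]$ (a notional shared-randomness index), with $R_F$ chosen so that
\[
R_F < H(U\mid X,Y,Z), \qquad R + R_F > H(U\mid Y).
\]
The first condition is the OSRB condition guaranteeing that $F$ is approximately uniform and independent of $(X^n, Y^n, Z^n)$; the second guarantees that $(M,F,Y^n)$ joint-typicality-decodes $U^n$ with high probability. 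Subtracting these two gives $R > H(U\mid Y) - H(U\mid X,Y,Z) = I(X,Z;U\mid Y)$, as required. One then derandomizes both the codebook and $F$: picking $f^\ast$ and a codebook for which the OSRB guarantees hold, the actual protocol has Alice sample $U^n$ from the reverse conditional $p_{U^n\mid X^n, F=f^\ast}$ (which she can do using her private randomness alone), compute $M$ from $U^n$, and transmit it; Bob decodes $\hat{U}^n$ from $(M, f^\ast, Y^n)$ by joint typicality, and outputs $\hat{Z}_i \sim p_{Z|U,Y}(\cdot\mid \hat{U}_i, Y_i)$ using his own private randomness.

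\emph{Main obstacle.} I expect the most delicate step to be the derandomization in the achievability: once $F$ is fixed to $f^\ast$, Alice's $U^n$ is restricted to a single bin, and one must verify that the resulting joint law on $(X^n, Y^n, \hat{Z}^n)$ nevertheless stays close in total variation to $\prod p_{XYZ}$. The choice $R_F < H(U\mid X,Y,Z)$ is tuned precisely so that OSRB yields near-independence of $F$ from the \emph{full} triple $(X^n, Y^n, Z^n)$ (and not merely from $(X^n, Y^n)$), which is what makes conditioning on a typical $f^\ast$ perturb the $(X^n, Y^n, Z^n)$-marginal only negligibly; the $(X^n, Y^n, \hat{Z}^n)$-marginal then inherits this closeness because Bob's generation step employs the correct conditional $p_{Z|U,Y}$. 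A related standard point is that Alice can actually sample from $p_{U^n\mid X^n, F=f^\ast}$ from her own randomness; this is handled by the reverse-channel-simulation step built into the YAG framework. On the converse side, the only non-routine issue is translating the approximate Markov relations (which hold only in the $\epsilon$-close induced distribution) into exact ones in the limit, which is done via the cardinality bound and compactness.
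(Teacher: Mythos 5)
Your proposal is correct and follows essentially the same route as the paper: OSRB with two bin indices $M,F$ satisfying $R_F<H(U|X,Y,Z)$ and $R+R_F>H(U|Y)$ for the achievability, and the standard $nR\geq I(M;X^n,\hat Z^n|Y^n)$ single-letterization with a time-sharing variable for the converse. The only cosmetic difference is your choice of auxiliary $U_i=(M,Y^{n\setminus i})$, where the paper keeps $X^{i-1}$ in $U_i$; since you drop $X^{i-1}$ from the conditioning in the subtracted entropy, your per-letter mutual information is a slightly weaker lower bound, but it still satisfies both Markov chains $U_i-X_i-Y_i$ and $\hat Z_i-(U_i,Y_i)-X_i$, so the converse goes through all the same.
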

\Theoremref{rate_no-privacy} can be proved along the similar lines as \Theoremref{as_rate}; A proof is in \Appendixref{proof_no-privacy}.
\begin{remark}\label{remark:randomized-Orlitsky-Roche}
{\em
\Theoremref{rate_no-privacy} can be seen as a generalization of a result of Orlitsky and Roche for deterministic function computation \cite{OrlitskyRoche}: if $p_{Z|XY}$ is a deterministic function, i.e., $p_{Z|XY}=\mathbbm{1}_{\{Z=f(X,Y)\}}$, where $f:\X\times\Y\to\Z$ is the function being computed, then the rate expression in \Theoremref{rate_no-privacy} reduces to $\displaystyle \min_{\substack{p_{U|XYZ}}} I(X;U|Y)$, where minimization is taken over all $p_{U|XYZ}$ that satisfy $U-X-Y$ and $H(Z|U,Y)=0$. This rate expression was defined to be the conditional graph entropy of an appropriately defined graph in \cite{OrlitskyRoche}. We remark that in the case of randomized function computation, the rate expression in \Theoremref{rate_no-privacy} does not seem to reduce to graph entropy. However, if we additionally require privacy only against Alice, i.e., $I(M;Y^n,\hat{Z}^n|X^n)\leq n\epsilon$ holds, then the corresponding expression for the optimal rate ($\displaystyle \min_{\substack{p_{U|XYZ}}} I(X,Z;U|Y)$, where minimization is taken over all $p_{U|XYZ}$ that satisfy $U-X-(Y,Z)$ and $Z-(U,Y)-X$) reduces to conditional graph entropy of an appropriately defined graph (see \cite{Data16} for details).
}
\end{remark}

\section{Proof of \Theoremref{as_rate}}\label{sec:proof_general-case}
We first note that the cardinality bound of $|\U|\leq|\X|\cdot|\Y|\cdot|\Z|+2$ follows from the Fenchel-Eggleston's strengthening of Carath\'eodory's theorem \cite[pg. 310]{CsiszarKorner11}. We first prove the achievability and then prove the converse. \\

\noindent{\bf Achievability:} Our achievable scheme uses the OSRB (output statistics of random binning) framework developed by Yassaee, Aref, and Gohari \cite{YassaeeArGo14}. We sketch our proof here. A detailed proof can be found in the \Appendixref{achievability}.

Fix a $p_{U|XYZ}$ that achieves the minimum in the expression for $R_S$ in \Theoremref{as_rate}. Define $p_{UXYZ}(u,x,y,z):= p_{XYZ}(x,y,z)\times p_{U|XYZ}(u|x,y,z)$. Note that $p_{UXYZ}(u,x,y,z)=p_{XY}(x,y)\times p_{U|X}(u|x)\times p_{Z|UY}(z|u,y)$ such that $U-(Y,Z)-X$ is a Markov chain.
Now consider $(U^n,X^n,Y^n,Z^n)$, where $(U_i,X_i,Y_i,Z_i)$'s are i.i.d. according to $p_{UXYZ}$. 
To make notation less cluttered, in the following we write $p(u^n,x^n,y^n,z^n)$ to mean $\Pi_{i=1}^n p_{UXYZ}(u_i,x_i,y_i,z_i)$.
We define two random mappings on $\U^n$ as follows: to each sequence $u^n\in\U^n$, assign two bins $f\in_R[2^{nR'}]$, $m\in_R[2^{nR_M}]$, independently and uniformly at random. 
Here $m$ serves as the message from Alice to Bob in the actual problem, and $f$ serves as extra randomness. Since we do not have any extra randomness in our model, later we will get rid of this by fixing an instance of it.
The random binning induces the following random p.m.f. 
\begin{align*}
P(x^n,y^n,z^n,u^n,f,m)=p(x^n,y^n,z^n,u^n)P(f|u^n)P(m|u^n).
\end{align*}
We use the Slepian-Wolf (SW) decoder $P^{\text{SW}}(\hat{u}^n|f,m,y^n)$ to produce an estimate $\hat{u}^n$ of $u^n$ from $(f,m,y^n)$.
It turns out that if $R'+R_M>H(U|Y)$, then the SW-decoder can reconstruct $U^n$ with low probability of error using $(F,M,Y^n)$ and a knowledge of the binning (see \Appendixref{achievability} for details), i.e.,
\begin{align}
P(x^n,y^n,u^n,f,m,\hat{u}^n) \approx P(x^n,y^n,u^n,f,m)\mathbbm{1}_{\{\hat{u}^n=u^n\}} \label{eq:achievability_interim1}
\end{align}
This implies
\begin{align}
P(x^n&,y^n,u^n,z^n,f,m,\hat{u}^n) \approx P(x^n,y^n,u^n,z^n,f,m)\mathbbm{1}_{\{\hat{u}^n=u^n\}} \notag \\
&= p(x^n,y^n,z^n,u^n)P(f|u^n)P(m|u^n)\mathbbm{1}_{\{\hat{u}^n=u^n\}}. \label{eq:achievability_interim2}
\end{align}
Later when we remove $F$ by conditioning on a particular instance $f$ of it, $P(x^n,y^n,z^n|f)$ may not be distributed as $p(x^n,y^n,z^n)$, which we need for our correctness condition. But if $R'<H(U|X,Y,Z)$, then $F$ becomes asymptotically independent of $(X^n,Y^n,Z^n)$ (see \Appendixref{achievability} for details), i.e.,
\begin{align}
P(x^n,y^n,z^n,f) \approx p^U(f)p(x^n,y^n,z^n),
\end{align}
where $p^U(f)$ means that $f$ takes values uniformly at random in $[2^{nR'}]$. In order to obtain a p.m.f. that corresponds to the actual coding scheme, we can expand $P(x^n,y^n,u^n,f,m,\hat{u}^n,z^n)$ in another way as follows.
\begin{align}
P(x^n,y^n&,u^n,f,m,\hat{u}^n,z^n) \approx p^U(f)p(x^n,y^n) \notag \\
&\times P(u^n,m,\hat{u}^n|x^n,y^n,f)p(z^n|\hat{u}^n,y^n), \label{eq:achievability_interim3}
\end{align}
where we define $p(z^n|\hat{u}^n,y^n):= \Pi_{i=1}^np_{Z|UY}(z_i|\hat{u}_i,y_i)$.
Note that we only need $R'<H(U|X,Y)$ (which is implied by $R'<H(U|X,Y,Z)$) to write $P(x^n,y^n,f)\approx p^U(f)p(x^n,y^n)$ in \eqref{eq:achievability_interim3}.
It follows from \eqref{eq:achievability_interim2}-\eqref{eq:achievability_interim3} that, if $R'+R_M>H(U|Y)$ and $R'<H(U|X,Y,Z)$ hold, then there exists a fixed binning (with corresponding p.m.f. $\bar{p}$) such that
\begin{align}
\bar{p}(x^n,y^n,z^n,f) &\approx p^U(f)p(x^n,y^n,z^n), \label{eq:achievability_interim45} \\
p^U(f)p(x^n,y^n)\bar{p}(u^n,m,\hat{u}^n&|x^n,y^n,f)p(z^n|\hat{u}^n,y^n) \notag \\
\approx p(x^n,y^n,z^n&,u^n)\bar{p}(f,m|u^n)\mathbbm{1}_{\{\hat{u}^n=u^n\}}. \label{eq:achievability_interim5}
\end{align}
For simplicity, define
\begin{align}
q(x^n,y^n,z^n,u^n,f,m,\hat{u}^n) &:=  p(x^n,y^n,z^n,u^n)\bar{p}(f,m|u^n) \notag \\
&\hspace{1cm}\times \mathbbm{1}_{\{\hat{u}^n=u^n\}}, \notag \\
\hat{q}(x^n,y^n,z^n,u^n,f,m,\hat{u}^n) &:=  p^U(f)p(x^n,y^n)\notag \\
\times \bar{p}(u^n&,m,\ \hat{u}^n|x^n,y^n,f)p(z^n|\hat{u}^n,y^n). \notag
\end{align}
Note that $\hat{q}$ corresponds to the actual coding scheme, except for the extra randomness $f$. It follows from \eqref{eq:achievability_interim45}, \eqref{eq:achievability_interim5}, and the above two equations (by marginalizing $(u^n,\hat{u}^n)$ away) that there exists an $f$ with $\bar{p}(f)>0$ (and $p^U(f)>0$) such that
\begin{align}
\bar{p}(x^n,y^n,z^n|f) &\approx p(x^n,y^n,z^n), \label{eq:achievability_interim7} \\
\hat{q}(x^n,y^n,z^n,m|f)&\approx q(x^n,y^n,z^n,m|f). \label{eq:achievability_interim8}
\end{align}
Now we show that $\hat{q}(x^n,y^n,z^n,m|f)$ satisfies correctness and privacy conditions.

Note that $q(x^n,y^n,z^n|f)=\bar{p}(x^n,y^n,z^n|f)$. This together with \eqref{eq:achievability_interim7} and \eqref{eq:achievability_interim8} implies correctness, i.e., $\hat{q}(x^n,y^n,z^n|f)\approx p(x^n,y^n,z^n)$.
For the privacy condition, it can be shown using the Markov chain $U-(Y,Z)-X$ that $I(M;X^n|Y^n,Z^n)|_{q(x^n,y^n,z^n,m|f)}=0$. Now \eqref{eq:achievability_interim8}, together with the fact that mutual information is a continuous function of the distribution, implies that $I(M;X^n|Y^n,Z^n)|_{\hat{q}(x^n,y^n,z^n,m|f)}\to0$ as $n\to\infty$.
Note that
\begin{align}
\hat{q}(x^n,y^n,z^n,u^n,m,\hat{u}^n|f) &= p(x^n,y^n)\bar{p}(u^n|x^n,f)\notag \\
\times \bar{p}(m|u^n)&\bar{p}^{\text{SW}}(\hat{u}^n|f,m,y^n)p(z^n|\hat{u}^n,y^n). \notag
\end{align}
Identifying $\bar{p}(m|x^n,f)=\sum_{u^n:m=m(u^n)}\bar{p}(u^n|x^n,f)$ as the encoder, and $(\bar{p}^{\text{SW}}(\hat{u}^n|f,m,y^n),p(z^n|\hat{u}^n,y^n))$ as the decoder results in a pair of encoder-decoder ensuring the security of the coding scheme. 
Now it follows that, for every $R_M>I(X,Z;U|Y)$, there exists $R'>0$ such that $R'<H(U|X,Y,Z)$ and $R_M+R'>H(U|Y)$ hold, which implies existence of a secure coding scheme by the above analysis with rate $R_M$. Using the Markov chain $U-(Y,Z)-X$ we have $R_M>I(Z;U|Y)$. \\

\noindent{\bf Converse:} Let $(p_{XY},p_{Z|XY})$ be securely computable with asymptotic security. For each $\epsilon>0$, there is a large enough $n$ and a code $\C_n$ (with rate, say, $R$) that satisfies the following properties:
{\allowdisplaybreaks
\begin{align}
& M - X^n - Y^n, \label{eq:1r-message} \\
& \hat{Z}^n - (M,Y^n) - X^n, \label{eq:1r-producing-output} \\
& I(M;X^n|Y^n,\hat{Z}^n) \leq n\epsilon, \label{eq:1r-privacy} \\
& \|p_{X^nY^n\hat{Z}^n}-p_{X^nY^nZ^n}\|_1]\leq \epsilon. \label{eq:1r-correctness}
\end{align}
We prove that the communication rate $R$ required by $\C_n$ is lower-bounded by $R_S-\delta$, where $\delta\to0$ as $\epsilon\to0$.
\begin{align}
&nR \geq H(M) \geq H(M|Y^n) \geq I(M;X^n,\hat{Z}^n|Y^n) \notag \\
&= H(X^n,\hat{Z}^n|Y^n) - H(X^n,\hat{Z}^n|M,Y^n) \notag \\
&\stackrel{\text{(a)}}{\geq} H(X^n,Z^n|Y^n) - n\epsilon_1- H(X^n,\hat{Z}^n|M,Y^n) \notag \\
&= \sum_{i=1}^n [H(X_i,Z_i|Y_i) - H(X_i,\hat{Z}_i|X^{i-1},\hat{Z}^{i-1},M,Y^n) - \epsilon_1] \notag \\
&\stackrel{\text{(b)}}{\geq} \sum_{i=1}^n [H(X_i,\hat{Z}_i|Y_i) - \epsilon_2 - H(X_i,\hat{Z}_i|X^{i-1},M,Y^n) - \epsilon_1]  \notag \\
&= \sum_{i=1}^n [I(X_i,\hat{Z}_i;U_i|Y_i) - \epsilon_1 -\epsilon_2], \notag \\
&\hspace{3cm} \text{ where } U_i=(X^{i-1},M,Y^{i-1},Y_{i+1}^n) \notag \\
&= n\Big[\sum_{i=1}^n \frac{1}{n} I(X_i,\hat{Z}_i;U_i|Y_i,T=i)\Big] - n\epsilon_1 -n\epsilon_2 \label{eq:1r-rate-interim3} \\
&= n[I(X_T,\hat{Z}_T;U_T|Y_T,T) - \epsilon_1 -\epsilon_2] \notag \\
&= n[I(X_T,\hat{Z}_T;U_T,T|Y_T) - I(X_T,\hat{Z}_T;T|Y_T) - \epsilon_1 -\epsilon_2] \notag \\
&\stackrel{\text{(c)}}{\geq} n[I(X_T,\hat{Z}_T;U_T,T|Y_T) - \epsilon_3 -\epsilon_1 - \epsilon_2]. \label{eq:1r-rate-interim4}
\end{align}
We used the following fact in (a)-(c): if $V$ and $V'$ are two random variables taking values in the same alphabet $\mathcal{V}$ such that $\|p_V-p_{V'}\|_1\leq\epsilon\leq1/4$, then it follows from \cite[Lemma 2.7]{CsiszarKorner11} that $|H(V)-H(V')|\leq \eta\log|\mathcal{V}|$, where $\eta\to0$ as $\epsilon\to0$.
Now \eqref{eq:1r-correctness} implies (a), i.e., $H(X^n,\hat{Z}^n|Y^n)\geq H(X^n,Z^n|Y^n)-n\epsilon_1$, where $\epsilon_1\to0$ as $\epsilon\to0$.
Note that \eqref{eq:1r-correctness} implies $\|p_{X_iY_i\hat{Z}_i}-p_{X_iY_iZ_i}\|_1\leq\epsilon$, for every $i\in[n]$, which implies (b), i.e., $H(X_i,\hat{Z}_i|Y_i)\geq H(X_i,Z_i|Y_i)-\epsilon_{2i}$, $i\in[n]$, where $\epsilon_{2i}\to0$ as $\epsilon\to0$; take $\epsilon_2=\max_{i\in[n]}\{\epsilon_{2i}\}$.
The random variable $T$ in \eqref{eq:1r-rate-interim3} is independent of $(M,X^n,Y^n,Z^n,\hat{Z}^n)$ and is uniformly distributed in $\{1,2,\hdots,n\}$.
Inequality (c), i.e., $I(X_T,\hat{Z}_T;T|Y_T)\leq\epsilon_3$, with $\epsilon_3\to0$ as $\epsilon\to0$, follows from $\|p_{X_iY_i\hat{Z}_i}-p_{X_iY_iZ_i}\|_1\leq\epsilon, i\in[n]$, and that $T$ is independent of $(M,X^n,Y^n,Z^n,\hat{Z}^n)$, and is shown in \Claimref{converse-small-claim} in \Appendixref{single-letterization}.
Let $U=(U_T,T)$. Similar to the above single-letterization by which we obtain \eqref{eq:1r-rate-interim4}, we can single-letterize the conditions \eqref{eq:1r-message}-\eqref{eq:1r-correctness} (proof is in \Appendixref{single-letterization}).
Let $\delta=\epsilon_1+\epsilon_2+\epsilon_3$. Now continuing from \eqref{eq:1r-rate-interim4}:
\begin{align}
R&\geq \displaystyle \min_{\substack{p_{U\hat{Z}_T|X_TY_T}: \\ U-X_T-Y_T \\ \hat{Z}_T-(U,Y_T)-X_T \\ I(U;X_T|Y_T,\hat{Z}_T)\leq\epsilon' \\ \|p_{X_TY_T\hat{Z}_T}-p_{X_TY_TZ_T}\|_1 \leq \epsilon}} I(X_T,\hat{Z}_T;U|Y_T) - \delta \label{eq:1r-rate-interim45} \\
&= \displaystyle \min_{\substack{p_{U\hat{Z}_T|XY}: \\ U-X-Y \\ \hat{Z}_T-(U,Y)-X \\ I(U;X|Y,\hat{Z}_T)\leq\epsilon' \\ \|p_{XY\hat{Z}_T}-p_{XYZ}\|_1 \leq \epsilon}} I(X,\hat{Z}_T;U|Y) - \delta \label{eq:1r-rate-interim5} \\
&\geq \displaystyle \min_{\substack{p_{UZ|XY}: \\ U-X-Y \\ Z-(U,Y)-X \\ U-(Y,Z)-X \\}} I(X,Z;U|Y) - \epsilon_4 - \delta \label{eq:1r-rate-interim6}
\end{align}
where $\delta'=\delta+\epsilon_4$ and $\delta'\to0$ as $\epsilon\to0$. In \eqref{eq:1r-rate-interim5} we used $p_{X_TY_T}=p_{XY}$ and $p_{Z_T|X_TY_T}=p_{Z|XY}$. In \eqref{eq:1r-rate-interim6} we used continuity of mutual information, continuity of $L_1$-norm, and the fact that if $(p_{XY},p_{Z|XY})$ is computable with asymptotic security, it can also be computed with perfect security (details are at the end of \Appendixref{single-letterization}). By using privacy condition $U-(Y,Z)-X$, the objective function in \eqref{eq:1r-rate-interim5} can be simplified as $I(X,Z;U|Y) = I(Z;U|Y)$. 
}
\subsection{$p_{XY}$ with full support: characterization and a simpler rate-optimal code}\label{subsec:full-support}
If $p_{XY}$ has full support, then we find an explicit $U$ that satisfies \eqref{eq:ps_correctness}-\eqref{eq:ps_privacy}, which will simplify the rate-expression in \Theoremref{as_rate} (see \Theoremref{rate_full-support}).  
For this, we first characterize $(p_{XY},p_{Z|XY})$ that can be computed with perfect security, where $p_{XY}$ has full support.
None of the proofs in this section depends on the specific distribution of $p_{XY}$ as long as it has full support. So the characterization remains the same for all $p_{XY}$ that have full support.

For every $y\in\Y$, define a set $\Z^{(y)}=\{z\in\Z:\exists x\in\X \text{ s.t. } p_{Z|XY}(z|x,y)>0\}$. Essentially, the set $\Z^{(y)}$ discards all those elements of $\Z$ that never appear as an output when Bob's input is $y$.
\begin{defn}\label{defn:equiv-relation}
For $y\in\Y$, define a relation $\equiv_y$ on the set $\Z^{(y)}$ as follows: for $z,z'\in\Z^{(y)}$, we say that $z\equiv_y z'$ if there is a sequence $z^{(1)},z^{(2)},\hdots,z^{(l-1)},z^{(l)}$ such that $z=z^{(1)}, z'=z^{(l)}$, and $z^{(i)}\sim_y z^{(i+1)}$, for every $i\in\{1,2,\hdots,l-1\}$.
\end{defn}
It is easy to see that $\equiv_y$ is an equivalence relation for every $y\in\Y$, which partitions $\Z^{(y)}$ into equivalent classes. Consider a $y\in\Y$, and let $\Z^{(y)}=\Z_{1}^{(y)}\biguplus\Z_{2}^{(y)}\biguplus\hdots\biguplus\Z_{k(y)}^{(y)}$, where $k(y)$ is the number of equivalence classes in the partition generated by $\equiv_y$. 
For every equivalence class $\Z_{i}^{(y)}$ in this partition, define a $|\X|\times|\Z_{i}^{(y)}|$ matrix $A_i^{(y)}$ such that $A_i^{(y)}(x,z)=p_{Z|XY}(z|x,y)$ for every $(x,z)\in\X\times\Z_{i}^{(y)}$. 
Note that, for every $i\in[k(y)]$, all the columns of $A_i^{(y)}$ are multiples of each other. 
Since $A_i^{(y)}$ is a rank-one matrix, we can write it as $A_i^{(y)}=\vec{\alpha}_i^{(y)}\vec{\gamma}_i^{(y)}$, where $\vec{\alpha}_i^{(y)}$ is a column vector (whose entries are non-negative and sum up to one, which makes it a unique probability vector) and $\vec{\gamma}_i^{(y)}$ is a row vector. Entries of $\vec{\alpha}_i^{(y)}$ and $\vec{\gamma}_i^{(y)}$ are indexed by $x\in\X$ and $z\in\Z_{i}^{(y)}$, respectively. 
So $A_i^{(y)}(x,z)=\vec{\alpha}_i^{(y)}(x)\vec{\gamma}_i^{(y)}(z)$.
Note that if $\vec{\alpha}_i^{(y)}=\vec{\alpha}_j^{(y)}$, then $i=j$. 

Suppose $(p_{XY},p_{Z|XY})$ is computable with perfect security, which implies that there exists $p(u,x,y,z)=p_{XYZ}(x,y,z)p(u|x,y,z)$ that satisfies \eqref{eq:ps_correctness}-\eqref{eq:ps_privacy}.
Note that the random variable $U$ corresponds to the message that Alice sends to Bob.
We define a set $\U_{i}^{(y)}$ to be the set of all those messages that Alice can send to Bob, and when Bob, having $y$ as his input outputs an element of $\Z_{i}^{(y)}$, as follows:
\begin{equation}
\U_{i}^{(y)}=\{u\in\U:p(u,z|y)>0 \text{ for some }z\in\Z_{i}^{(y)}\}. \label{eq:one-round_set-msgs}
\end{equation}
Note that for every $y\in\Y$ and $i\in[k(y)]$, the probability vector $\vec{\alpha}_i^{(y)}$ corresponds to the equivalence class $\Z_{i}^{(y)}$.
The following claim is proved in \Appendixref{characterization}.

\begin{claim}\label{claim:disjoint-messages}
Consider any $y,y'\in\Y$ and $i\in[k(y)]$, $j\in[k(y')]$. If $\vec{\alpha}_i^{(y)}\neq \vec{\alpha}_j^{(y')}$, then $\U_{i}^{(y)}\cap\U_{j}^{(y')}=\phi$.
\end{claim}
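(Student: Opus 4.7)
My plan is to prove the contrapositive: assuming there is some $u\in\U_i^{(y)}\cap\U_j^{(y')}$, I will show $\vec{\alpha}_i^{(y)}=\vec{\alpha}_j^{(y')}$. The strategic insight is that, under the three perfect-security Markov chains \eqref{eq:ps_correctness}--\eqref{eq:ps_privacy} together with the full-support assumption on $p_{XY}$, the quantity $\vec{\alpha}_i^{(y)}(x)$ collapses into a normalized form of $p(x|u)/p_X(x)$; since that depends only on $u$, any two $\vec{\alpha}$'s sharing a common $u$ must coincide.

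First, I would unpack the rank-one structure. For $z\in\Z_i^{(y)}$ with $p(z|y)>0$, Bayes' rule applied to $p(z|x,y)=\vec{\alpha}_i^{(y)}(x)\vec{\gamma}_i^{(y)}(z)$ yields
\[
p(x|y,z)=\frac{\vec{\alpha}_i^{(y)}(x)\,p(x|y)}{\sum_{x'}\vec{\alpha}_i^{(y)}(x')\,p(x'|y)},
\]
so $p(x|y,z)$ is in fact independent of the specific $z\in\Z_i^{(y)}$; call this common value $\beta_i^{(y)}(x)$. Next I would bring $u$ into the picture. Because $u\in\U_i^{(y)}$, I can choose $z\in\Z_i^{(y)}$ with $p(u,z|y)>0$; the privacy chain $U-(Y,Z)-X$ gives $p(x|u,y,z)=p(x|y,z)=\beta_i^{(y)}(x)$, while $Z-(U,Y)-X$ gives $p(x|u,y,z)=p(x|u,y)$. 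Hence $p(x|u,y)=\beta_i^{(y)}(x)$. Using $U-X-Y$ I would also rewrite $p(x|u,y)=p(x|u)\,p(y|x)/p(y|u)$.

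Finally I would equate the two expressions for $p(x|u,y)$ and substitute $p(x|y)=p_{XY}(x,y)/p_Y(y)$ and $p(y|x)=p_{XY}(x,y)/p_X(x)$; all of these are positive since $p_{XY}$ has full support. The common factor $p_{XY}(x,y)$ cancels and leaves
\[
\vec{\alpha}_i^{(y)}(x)=C(u,y,i)\cdot\frac{p(x|u)}{p_X(x)},
\]
with $C(u,y,i)$ independent of $x$. Since $\vec{\alpha}_i^{(y)}$ is a probability vector, $C(u,y,i)$ is forced to be $1/\sum_{x'}p(x'|u)/p_X(x')$, a function of $u$ alone. Running the same three steps with $(j,y')$ in place of $(i,y)$ gives the identical expression for $\vec{\alpha}_j^{(y')}(x)$, and the claim follows.

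The main obstacle is bookkeeping rather than ideas: each Bayes inversion and each Markov substitution requires the relevant joints to be positive. Full support of $p_{XY}$ gives positivity of $p(x,y)$, $p(y|x)$, $p(x|y)$; the witness $z\in\Z_i^{(y)}$ with $p(u,z|y)>0$ gives positivity of $p(u,y)$, $p(y|u)$, and $p(u,y,z)$, so every division is legal without case analysis. Note too that the argument never uses the particular form of $p_{XY}$ beyond full support, matching the remark at the start of \Subsectionref{full-support}.
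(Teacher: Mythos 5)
Your proof is correct and takes a genuinely different route from the paper. The paper argues by contradiction with a two-case split (supports of $\vec{\alpha}_i^{(y)}$ and $\vec{\alpha}_j^{(y')}$ equal vs.\ unequal), and in each case manufactures a conflicting pair of ratio identities of the form $\frac{p_{Z|XY}(z|x,y)}{p_{Z|XY}(z|x',y)}=\frac{p_{U|X}(u|x)}{p_{U|X}(u|x')}$. You instead show positively that any $u\in\U_i^{(y)}$ \emph{determines} $\vec{\alpha}_i^{(y)}$ outright: chaining the two perfect-privacy/correctness Markov substitutions $p(x|u,y,z)=p(x|y,z)$ and $p(x|u,y,z)=p(x|u,y)$ with the rank-one Bayes inversion and $U-X-Y$, the full-support assumption lets you cancel $p_{XY}(x,y)$ and normalize to obtain $\vec{\alpha}_i^{(y)}(x)=\dfrac{p(x|u)/p_X(x)}{\sum_{x'}p(x'|u)/p_X(x')}$, a function of $u$ alone. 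This is cleaner: it avoids the case analysis (the explicit formula automatically forces the two $\vec{\alpha}$'s to have the same support, which is exactly what the paper's Case~2 handles separately) and it exposes the invariant that each message carries. The trade-off is that your argument leans slightly harder on the full-support hypothesis (you divide by $p(x|y)$, $p_X(x)$, $p_Y(y)$ everywhere), whereas the paper's ratio chase only ever divides by quantities whose positivity comes from the chosen witnesses $z,z'$; both are legitimate here since full support is assumed in the subsection. The positivity bookkeeping in your last paragraph is correct: $p(u,z|y)>0$ together with $p_Y(y)>0$ gives $p(u,y,z)>0$ and hence $p(u,y)>0$, which is all that the two conditional-independence rewrites and the $U-X-Y$ rewrite require.
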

With the help of \Claimref{disjoint-messages} we have (proof is in \Appendixref{characterization}) that for every $y,y'\in\Y$, the corresponding collections of probability vectors $\{\vec{\alpha}_i^{(y)}:i\in[k(y)]\}$ and $\{\vec{\alpha}_j^{(y')}:j\in[k(y')]\}$ are equal.
\begin{claim}\label{claim:equal-alpha-vectors}
For all $y,y'\in\Y$, we have $k(y)=k(y')=:k$, and $\{\vec{\alpha}_1^{(y)},\vec{\alpha}_2^{(y)},\hdots,\vec{\alpha}_{k}^{(y)}\}=\{\vec{\alpha}_1^{(y')},\vec{\alpha}_2^{(y')},\hdots,\vec{\alpha}_{k}^{(y')}\}$.
\end{claim}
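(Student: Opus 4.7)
The plan is to prove this by exhibiting a single well-defined map from the support of $U$ to a common collection of probability vectors, which is simultaneously indexed by the equivalence classes for every $y$. The key leverage is \Claimref{disjoint-messages}, which makes the assignment $u \mapsto \vec{\alpha}_{i}^{(y)}$ (where $u \in \U_{i}^{(y)}$) consistent across different choices of $y$.

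First, I would show that for every $u$ in the support of $U$ and every $y \in \Y$, there is some $i \in [k(y)]$ with $u \in \U_{i}^{(y)}$. This uses the full support of $p_{XY}$ together with the Markov chain $U - X - Y$ from \eqref{eq:ps_correctness}: since $p(u) > 0$ and $p(y) > 0$, we have $p(u \mid y) = \sum_{x} p(u \mid x)p(x \mid y) > 0$. Summing the Markov chain $Z-(U,Y)-X$ from \eqref{eq:ps_producing-output}, there must exist $z$ with $p(u, z \mid y) > 0$, and such a $z$ automatically lies in some equivalence class $\Z_{i}^{(y)}$. Hence $u$ belongs to $\U_{i}^{(y)}$ for some $i$.

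Second, I would use the remark following the definition of $\vec{\alpha}_i^{(y)}$, namely that for a fixed $y$ the vectors $\vec{\alpha}_1^{(y)}, \ldots, \vec{\alpha}_{k(y)}^{(y)}$ are pairwise distinct. Combined with \Claimref{disjoint-messages}, this shows that for a fixed $y$ the sets $\U_{1}^{(y)}, \ldots, \U_{k(y)}^{(y)}$ are pairwise disjoint and, by the first step, cover $\mathrm{supp}(U)$. Thus the assignment $\Phi_y : \mathrm{supp}(U) \to \{\vec{\alpha}_{1}^{(y)}, \ldots, \vec{\alpha}_{k(y)}^{(y)}\}$ defined by $\Phi_y(u) = \vec{\alpha}_{i}^{(y)}$ if $u \in \U_{i}^{(y)}$ is well-defined and surjective.

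Finally, I would verify that $\Phi_y$ does not depend on $y$. Fix $u \in \mathrm{supp}(U)$ and two inputs $y, y'$, with $u \in \U_{i}^{(y)}$ and $u \in \U_{j}^{(y')}$. The contrapositive of \Claimref{disjoint-messages} forces $\vec{\alpha}_{i}^{(y)} = \vec{\alpha}_{j}^{(y')}$, that is, $\Phi_y(u) = \Phi_{y'}(u)$. Therefore the image of $\Phi_y$, which coincides with $\{\vec{\alpha}_{i}^{(y)} : i \in [k(y)]\}$, is the same set for every $y$, and in particular $k(y)$ equals its cardinality, proving the claim. The only subtle point is the opening step where the full support assumption is used to guarantee that every message is ``active'' at every $y$; once that is in hand, \Claimref{disjoint-messages} does all the heavy lifting.
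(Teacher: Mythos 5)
Your proof is correct and follows essentially the same route as the paper: both rely on \Claimref{disjoint-messages} together with the observation (via $U-X-Y$ and full support of $p_{XY}$) that the $\U_i^{(y)}$ partition the same set $\mathrm{supp}(U)$ for every $y$, forcing the partitions and the associated $\vec{\alpha}$-vectors to coincide across $y$. Your version is in fact somewhat more careful than the paper's, which asserts ``the set of messages Alice sends is the same for all $y$'' without spelling out the supporting computation, and your explicit construction of the map $\Phi_y$ is a clean way of packaging the same argument; the one small inefficiency is invoking the Markov chain $Z-(U,Y)-X$ to produce a $z$ with $p(u,z\mid y)>0$, which already follows from $p(u\mid y)>0$ and normalization of $p(\cdot\mid u,y)$.
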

For ease of notation, without loss of generality, we rearrange the indices, to have $\vec{\alpha}_j^{(y')}=\vec{\alpha}_i^{(y)}$ if and only if $i=j$.
Now it follows from \Claimref{disjoint-messages} and \Claimref{equal-alpha-vectors} that the message set $\U$ and the alphabet $\Z^{(y)}$, for every $y\in\Y$, can be partitioned into $k$ parts as follows:
\begin{align}
\U &= \U_1\uplus\U_2\uplus\hdots\uplus\U_k, \notag \\
\Z^{(y)} &= \Z_{1}^{(y)}\uplus\Z_{2}^{(y)}\uplus\hdots\uplus\Z_{k}^{(y)}, \notag
\end{align}
where $\Z_{i}^{(y)} = \{z\in\Z^{(y)}: p(u,z|x,y)>0\text{ for some }x\in\X, u\in\U_i\}$. Note that the same $\U_i$ is used to define $\Z_{i}^{(y)}$ (and corresponds to $\vec{\alpha}_i^{(y)}$ also) for every $y\in\Y$.
Now we can state the characterization theorem, which is proved in \Appendixref{characterization}.
\begin{thm}\label{thm:one-round-characterization}
Suppose $p_{XY}$ has full support. Then $(p_{XY},p_{Z|XY})$ is computable with perfect security if and only if the following holds for every $y,y'\in\Y${\em :}
\begin{enumerate}
\item $\{\vec{\alpha}_1^{(y)},\vec{\alpha}_2^{(y)},\hdots,\vec{\alpha}_{k}^{(y)}\}=\{\vec{\alpha}_1^{(y')},\vec{\alpha}_2^{(y')},\hdots,\vec{\alpha}_{k}^{(y')}\}$.
\item For any $i\in[k]$, $\sum_{z\in\Z_{i}^{(y)}}p_{Z|XY}(z|x,y)=\sum_{z\in\Z_{i}^{(y')}}p_{Z|XY}(z|x,y')$ for every $x\in\X$.
\end{enumerate}
\end{thm}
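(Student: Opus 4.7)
The statement is an ``iff'', and the heavy lifting has already been done by \Claimref{disjoint-messages} and \Claimref{equal-alpha-vectors}; the plan is to treat the two directions separately, with those claims and the full-support hypothesis as the main tools.

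For the \emph{necessity} direction, Condition~1 is exactly \Claimref{equal-alpha-vectors}, so only Condition~2 needs work. The plan is to marginalize out $U$ using the Markov chains $U-X-Y$ and $Z-(U,Y)-X$:
\[
\sum_{z\in\Z_{i}^{(y)}} p_{Z|XY}(z|x,y) \;=\; \sum_u p(u|x)\sum_{z\in\Z_{i}^{(y)}} p(z|u,y).
\]
Using the partition $\U=\U_1\uplus\cdots\uplus\U_k$ established by \Claimref{disjoint-messages}--\Claimref{equal-alpha-vectors}, I would argue that the inner sum equals $1$ for $u\in\U_i$ and $0$ for $u\in\U_j$ with $j\neq i$. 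The ``$=0$'' case is immediate from disjointness. The ``$=1$'' case additionally needs $p_{XY}$ full-support to conclude that $p(u|y)>0$ whenever $p(u|x)>0$ for some $x$, after which $p(z|y)=0$ for $z\notin\Z^{(y)}$ forces $p(z|u,y)=0$ there, and the mass on $\Z_j^{(y)}$ with $j\neq i$ vanishes by disjointness. Thus the left-hand side equals $\Pr(U\in\U_i\mid X=x)$, which is $y$-independent, matching the analogous sum at $y'$.

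For the \emph{sufficiency} direction, I would construct an explicit $n=1$ perfect-security code. Write $\vec{\alpha}_i$ for the common probability vector (after the WLOG reindexing) and set $\beta_i^{(y)}:=\sum_{z\in\Z_{i}^{(y)}}\vec{\gamma}_i^{(y)}(z)$. Condition~2 then reads $\vec{\alpha}_i(x)\beta_i^{(y)}=\vec{\alpha}_i(x)\beta_i^{(y')}$ for every $x$; since $\vec{\alpha}_i$ is a nonzero probability vector, this pins down a $y$-independent $\beta_i$. I would take $U\in[k]$ and set the encoder $p(u=i|x):=\vec{\alpha}_i(x)\beta_i$ and the decoder $p(z|u=i,y):=\vec{\gamma}_i^{(y)}(z)/\beta_i$, supported on $\Z_{i}^{(y)}$. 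A direct calculation shows the induced output distribution recovers $p_{Z|XY}$ (outputs outside $\Z^{(y)}$ vanishing by definition of $\Z^{(y)}$), and the three Markov chains \eqref{eq:ps_correctness}--\eqref{eq:ps_privacy} follow structurally: $U$ is generated from $X$ alone, $Z$ from $(U,Y)$ alone, and given $(Y,Z)$ the equivalence class index $i$ with $z\in\Z_{i}^{(y)}$ is uniquely determined, so $U-(Y,Z)-X$ holds.

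The main obstacle is a bookkeeping check in the sufficiency direction: the proposed encoder must be a genuine probability distribution, i.e., $\sum_i \vec{\alpha}_i(x)\beta_i=1$ for every $x\in\X$. This reduces, via Condition~2 and the partition $\Z^{(y)}=\Z_1^{(y)}\uplus\cdots\uplus\Z_k^{(y)}$, to $\sum_i\sum_{z\in\Z_{i}^{(y)}}p_{Z|XY}(z|x,y)=\sum_{z\in\Z^{(y)}}p_{Z|XY}(z|x,y)=1$, which holds by the definition of $\Z^{(y)}$. Everything else is routine verification.
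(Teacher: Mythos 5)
Your proof is correct and takes essentially the same approach as the paper: necessity via \Claimref{equal-alpha-vectors} together with the observation that $\Pr(U\in\U_i\mid X=x)$ is $y$-independent, and sufficiency via the explicit one-round code that sends the equivalence-class index. The only difference is notational — you phrase the encoder $p(u=i|x)=\vec{\alpha}_i(x)\beta_i$ and decoder $p(z|i,y)=\vec{\gamma}_i^{(y)}(z)/\beta_i$ through the rank-one factorization $A_i^{(y)}=\vec{\alpha}_i\vec{\gamma}_i^{(y)}$, whereas the paper writes the same code in terms of $p_{Z|XY}$ with fixed reference elements $y_1$ and $x'$; these coincide, and both verifications (including the normalization check and the deterministic-given-$(Y,Z)$ argument for $U-(Y,Z)-X$) match the paper's.
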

We introduce a new random variable $W$, which takes values in $[k]$, and is jointly distributed with $(X,Y,Z)$ as follows: define $p_{XYWZ}(x,y,i,z):= 0$, if $p_{XYZ}(x,y,z)=0$; otherwise, define
\begin{align}
p_{XYWZ}&(x,y,i,z) \notag \\
&\quad:= p_{XY}(x,y)\times p_{W|X}(i|x)\times p_{Z|WY}(z|i,y) \notag \\
&\quad:=  p_{XY}(x,y)\times \Big(\sum_{z\in\Z_i^{(y)}}p_{Z|XY}(z|x,y)\Big)\notag \\
&\quad\times\Big(\mathbbm{1}_{\{z\in\Z_i^{(y)}\}}\times\frac{p_{Z|XY}(z|x,y)}{\sum_{z\in\Z_i^{(y)}}p_{Z|XY}(z|x,y)}\Big). \label{eq:W-interim2}
\end{align}
Comments are in order: (i) We defined $p_{W|X}(i|x)$ to be $\sum_{z\in\Z_i^{(y)}}p_{Z|XY}(z|x,y)$ in \eqref{eq:W-interim2} -- this is a valid definition because $\sum_{z\in\Z_i^{(y)}}p_{Z|XY}(z|x,y)$ is same for all $y$ (see \Theoremref{one-round-characterization}). (ii) We defined $p_{Z|WY}(z|i,y)$ to be $\mathbbm{1}_{\{z\in\Z_i^{(y)}\}}\times\frac{p_{Z|XY}(z|x,y)}{\sum_{z\in\Z_i^{(y)}}p_{Z|XY}(z|x,y)}$ in \eqref{eq:W-interim2} -- this is also a valid definition because the matrix $A_i^{(y)}$ defined earlier is a rank-one matrix, and therefore, $\frac{p_{Z|XY}(z|\tilde{x},y)}{\sum_{z\in\Z_i^{(y)}}p_{Z|XY}(z|\tilde{x},y)}$ is same for all $\tilde{x}$'s for which $p_{Z|XY}(z|\tilde{x},y)>0$. 
It follows from \eqref{eq:W-interim2} that $p_{XYWZ}(x,y,i,z) = \mathbbm{1}_{\{z\in\Z_i^{(y)}\}}\times p_{XYZ}(x,y,z)$. 
Note that $W$ is a deterministic function of both $U$ as well as of $(Y,Z)$. 
Now we are ready to state the main theorem of this section.

\begin{thm}\label{thm:rate_full-support}
Suppose $(p_{XY},p_{Z|XY})$ is computable with asymptotic security, where $p_{XY}$ has full support. Then
\[R_S = H(W|Y),\]
where $H(W|Y)$ is evaluated with $p_{WY}$, obtained from \eqref{eq:W-interim2}.
\end{thm}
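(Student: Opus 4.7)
The plan is to establish $R_S = H(W|Y)$ by combining the minimax formula of \Theoremref{as_rate} with the structural characterization in \Subsectionref{full-support}. The upper bound $R_S \leq H(W|Y)$ will come from plugging $U := W$ into the minimization in \Theoremref{as_rate}, and the lower bound from showing that every feasible $U$ forces $I(Z;U|Y) \geq H(W|Y)$.

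For achievability, I would take $U := W$ with the joint distribution given by \eqref{eq:W-interim2} and read the three Markov chains directly off the factorization: the factor $p_{W|X}(i|x)$ does not depend on $y$, giving $W-X-Y$; the factor $p_{Z|WY}(z|i,y)$ does not involve $x$, giving $Z-(W,Y)-X$; and $W$ is by construction a deterministic function of $(Y,Z)$, giving $W-(Y,Z)-X$ for free. The objective then collapses to $I(Z;W|Y) = H(W|Y) - H(W|Y,Z) = H(W|Y)$, so the minimum in \Theoremref{as_rate} is at most $H(W|Y)$. As a cleaner alternative, I would also present a direct Slepian--Wolf scheme: Alice draws $W^n$ coordinatewise from $p_{W|X}$, communicates its SW bin at rate slightly above $H(W|Y)$, and Bob decodes $\hat W^n$ using $Y^n$ as side information and then samples $\hat Z^n$ coordinatewise from $p_{Z|WY}$. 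Correctness follows from standard SW error analysis, and privacy is immediate because the message is a function of $W^n$ alone, and $W^n$ is in turn a deterministic function of $(Y^n,\hat Z^n)$ (with high probability), so $I(M;X^n\mid Y^n,\hat Z^n)$ is negligible.

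For the converse, the key step I need to prove is $H(W|U) = 0$ for every $U$ feasible in the minimization of \Theoremref{as_rate}. I expect this to be the main obstacle, because the sets $\U_i^{(y)}$ in \Subsectionref{full-support} depend on the particular joint law $p(u,x,y,z)$, so it is not a priori obvious that a \emph{generic} feasible $U$ inherits the partition structure. My approach is to observe that \Claimref{disjoint-messages} and \Claimref{equal-alpha-vectors} were derived using only the three Markov chains \eqref{eq:ps_correctness}--\eqref{eq:ps_privacy} together with the full-support hypothesis on $p_{XY}$, so they apply verbatim to any feasible $(U,X,Y,Z)$, not just to the one associated with a particular protocol. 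Applying them to a generic feasible $U$ and carrying out the relabeling described just before \Theoremref{one-round-characterization} produces a partition $\U = \U_1 \uplus \cdots \uplus \U_k$ of the support of $U$ with the property that whenever $(U,Y)=(u,y)$ has positive mass with $u \in \U_i$, the conditional law of $Z$ is supported in $\Z_i^{(y)}$; equivalently $W(Y,Z) = i$, so $W$ is a deterministic function of $U$.

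Once $H(W|U) = 0$, the converse is a one-line data-processing step: since $W$ is a deterministic function of $(Y,Z)$, the Markov chain $U-(Y,Z)-W$ holds, whence
\[
I(Z;U|Y) \;\geq\; I(W;U|Y) \;=\; H(W|Y) - H(W|U,Y) \;\geq\; H(W|Y) - H(W|U) \;=\; H(W|Y).
\]
Taking the minimum over all feasible $U$ gives $R_S \geq H(W|Y)$, which matches the achievability bound and yields $R_S = H(W|Y)$.
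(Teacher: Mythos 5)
Your proposal is correct and follows essentially the same route as the paper: achievability by identifying $U=W$ (or equivalently by the direct Slepian--Wolf scheme), and the converse by arguing that Claims 1 and 2 apply to \emph{any} feasible $U$ in the minimization, making $W$ a deterministic function of $U$, so that $I(Z;U|Y)\geq H(W|Y)$. The only cosmetic difference is the final data-processing chain — you write $I(Z;U|Y)\geq I(W;U|Y)=H(W|Y)$ while the paper writes $I(Z;U|Y)=I(Z,W;U,W|Y)\geq H(W|Y)$ — but these are the same argument.
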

\begin{proof}
Although the general achievability of \Theoremref{as_rate} is applicable here, we give a more direct coding scheme with a simpler argument of its correctness and privacy.
Alice passes $X^n$ through the virtual DMC $p_{W|X}$ and obtains an i.i.d. sequence $W^n$, where $p_{W|X}$ is obtained from \eqref{eq:W-interim2}. Note that if Bob gets access to this $W^n$ sequence exactly, then perfect correctness can be achieved -- Bob outputs an i.i.d. sequence $Z^n$ by passing $(W^n,Y^n)$ through the virtual DMC $p_{Z|WY}$, where $p_{Z|WY}$ is obtained from \eqref{eq:W-interim2}. It turns out that conveying $W^n$ to Bob also maintains privacy against Bob.
Now Alice can use the Slepian-Wolf coding scheme to convey $W^n$ to Bob at a rate $H(W|Y)$. We show in \Appendixref{characterization} that this scheme is asymptotically secure.
For the converse, note that $W$ is a function of both $U$ as well as of $(Y,Z)$, and using this we can simplify the objective function in \eqref{eq:1r-rate-interim6}: $I(Z;U|Y)= I(Z,W;U,W|Y)\geq H(W|Y)$.
\end{proof}

\section*{Acknowledgements}
Deepesh Data would like to thank Gowtham Raghunath Kurri for explaining to him the OSRB framework.
This work was done in part while Deepesh Data was visiting IDC Herzliya, Israel, when he was supported by ERC under the EU's Seventh Framework Programme (FP/2007-2013) ERC Grant Agreement n. 307952.
Deepesh Data's research was supported in part by a Microsoft Research India Ph.D. Fellowship.
Vinod Prabhakaran's research was supported in part by a Ramanujan Fellowship from the Department of Science and Technology, Government of India and by Information Technology Research Academy (ITRA), Government of India under ITRA Mobile grant ITRA/15(64)/Mobile/USEAADWN/01.
\bibliographystyle{IEEEtran}
\bibliography{crypto}

\appendices

\section{Preliminaries}\label{app:prelims}
\subsection{Notation}\label{subsec:notations}
We abbreviate discrete memoryless channel by DMC and independent and identically distributed by i.i.d.
We write $p^U$ to denote the uniform distribution over the corresponding alphabet.
We write capital letters $P, Q,$ etc., to denote random p.m.f.'s (more on this in \Subsectionref{YassaeeArGo14-paper}), and small letters $p,q,$ etc., to denote non-random p.m.f.'s
The total variation distance between two (non-random) p.m.f.'s $p_U$ and $q_U$ on the same alphabet $\U$ is defined by $\|p_U-q_U\|_1:= \frac{1}{2}\sum_{u\in\U}|p_U(u)-q_U(u)|$. Note that $\|p_U-q_U\|_1\leq1$.
\begin{defn}\label{defn:p.m.f.-approx}
For any two random p.m.f.'s $P_U$ and $Q_U$ on $\U$, we write $P_U\stackrel{\epsilon}{\approx}Q_U$ if $\mathbb{E}[\|P_U-Q_U\|_1]<\epsilon$. Similarly, for any two (non-random) p.m.f.'s $p_U$ and $q_U$ on $\U$, we write $p_U\stackrel{\epsilon}{\approx}q_U$ if $\|p_U-q_U\|_1<\epsilon$.
\end{defn}
\begin{defn}\label{defn:p.m.f.-sequences-approx}
For any two sequences of random p.m.f.'s $(P_{U^{(n)}})_{n\in\mathbb{N}}$ and $(Q_{U^{(n)}})_{n\in\mathbb{N}}$ (where for every $n\in\mathbb{N}$, $U^{(n)}$ takes values in $\U^{(n)}$ which is an arbitrary set, different from $\U^n$ -- the $n$-fold cartesian product of $\U$), we write $P_{U^{(n)}}\approx Q_{U^{(n)}}$ if $\lim_{n\to\infty}\mathbb{E}[\|P_{U^{(n)}}-Q_{U^{(n)}}\|_1]\to0$. Similarly, for any two sequences of (non-random) p.m.f.'s $(p_{U^{(n)}})_{n\in\mathbb{N}}$ and $(q_{U^{(n)}})_{n\in\mathbb{N}}$, we write $p_{U^{(n)}}\approx q_{U^{(n)}}$ if $\lim_{n\to\infty}\|p_{U^{(n)}}-q_{U^{(n)}}\|_1\to0$.
\end{defn}

\subsection{Output Statistics of Random Binning}\label{subsec:YassaeeArGo14-paper}
Some of our achievability proofs use the OSRB (output statistics of random binning) framework developed by Yassaee, Aref, and Gohari \cite{YassaeeArGo14} and their results. 
We simplify the setting and statements of these results for our purpose.
Let $(U,V)$ be a discrete memoryless correlated source distributed according to a p.m.f. $p_{UV}$, where $U$ and $V$ take values in finite alphabets $\U$ and $\V$, respectively. A random binning $\B$ consists of a set of two random mappings $\B_i:\U^n\to[2^{nR_i}]$, $i\in\{1,2\}$, in which $\B_i$ maps each sequence $u^n\in\U^n$ uniformly and independently to an element in $[2^{nR_i}]$. In other words, the random binning $\B_i$ is a random partition of the set $\U^n$ into $2^{nR_i}$ bins. In the following we write $p(u^n,v^n)$ to denote $\Pi_{i=1}^np_{UV}(u_i,v_i)$. A random binning induces the following random p.m.f. on the set $\U^n\times\V^n\times[2^{nR_1}]\times[2^{nR_2}]$:
\begin{align*}
P(u^n,v^n,b_1,b_2) = p(u^n,v^n)\times\Pi_{i=1}^2\mathbbm{1}_{\{\B_i(u^n)=b_i\}},
\end{align*}
where capital $P$ is used to indicate the p.m.f. induced on $(u^n,v^n,b_1,b_2)$ is random. For $i\in\{1,2\}$, the following theorem finds constraints on the rate $R_i$, such that $V^n$ becomes asymptotically independent of the bin $\B_i(U^n)$ in expectation, where expectation is taken over the random binning $\B_i$. 
\begin{thm}{\cite[Theorem 1]{YassaeeArGo14}}\label{thm:source-bin-indep}
For $i\in\{1,2\}$, if $R_i<H(U|V)$, then as $n$ goes to infinity, we have 
\begin{align*}
\mathbb{E}_{\B_i}[\|P(v^n,b_i)-p(v^n)\times p^U(b_i)\|_1]\to0,
\end{align*}
where $\B_i$ is the set of all random mappings $\B_i:\U^n\to[2^{nR_i}]$.
\end{thm}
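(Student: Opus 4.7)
The plan is to reduce the convergence in expected total variation to a second-moment concentration argument, exploiting the fact that the random bins $\B_i(u^n)$ are chosen independently and uniformly over $[2^{nR_i}]$ for distinct $u^n$. A direct computation verifies that the random marginal has the correct mean:
$$
\mathbb{E}_{\B_i}[P(v^n,b_i)] \;=\; \sum_{u^n} p(u^n,v^n)\,\Pr[\B_i(u^n)=b_i] \;=\; p(v^n)\cdot 2^{-nR_i} \;=\; p(v^n)\,p^U(b_i),
$$
so it suffices to bound the expected $\ell_1$ fluctuation of $P(v^n,b_i)$ around its mean, i.e., to show $\sum_{v^n,b_i}\mathbb{E}_{\B_i}\bigl|P(v^n,b_i)-p(v^n)2^{-nR_i}\bigr|\to 0$.

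By Jensen's inequality and pairwise independence of the binning across distinct $u^n$,
$$
\mathbb{E}_{\B_i}\bigl|P(v^n,b_i)-p(v^n)2^{-nR_i}\bigr| \;\le\; \sqrt{\mathrm{Var}_{\B_i}[P(v^n,b_i)]} \;=\; \sqrt{\sum_{u^n}p(u^n,v^n)^2\cdot 2^{-nR_i}(1-2^{-nR_i})}.
$$
Summing over the $2^{nR_i}$ bins and applying Cauchy--Schwarz gives
$$
\sum_{b_i}\sqrt{\mathrm{Var}_{\B_i}[P(v^n,b_i)]} \;\le\; 2^{nR_i/2}\,p(v^n)\sqrt{\textstyle\sum_{u^n}p(u^n|v^n)^2}.
$$
Thus the problem reduces to controlling $\sum_{u^n}p(u^n|v^n)^2$ for typical $v^n$.

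This last quantity is bounded via conditional typicality. For any $\delta>0$, split the $u^n$-sum at the threshold $p(u^n|v^n)\le 2^{-n(H(U|V)-\delta)}$: the sub-threshold part contributes at most $2^{-n(H(U|V)-\delta)}$, using $\sum_{u^n}p(u^n|v^n)\le 1$, while the remainder (conditionally atypical $u^n$) has vanishing $p(\cdot|v^n)$-mass for every $\delta$-typical $v^n$ by the conditional AEP. Summing over $v^n$ and absorbing the small marginal mass of atypical $v^n$ into an $o(1)$ remainder (using $\|\cdot\|_1\le 1$ term-by-term), one obtains
$$
\mathbb{E}_{\B_i}\bigl\|P(v^n,b_i)-p(v^n)\,p^U(b_i)\bigr\|_1 \;\le\; 2^{-n(H(U|V)-R_i-\delta)/2}+o(1),
$$
which vanishes whenever $R_i<H(U|V)-\delta$; letting $\delta\downarrow 0$ yields the claim. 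The main obstacle is balancing the factor $2^{nR_i/2}$ produced by summing standard deviations across bins against the $2^{-n(H(U|V)-\delta)/2}$ factor coming from conditional typicality: the gap $H(U|V)-R_i$ must remain strictly positive after swallowing the typicality slack $\delta$, which is precisely guaranteed by the hypothesis. Carefully keeping the atypical contributions (both in $u^n$ given $v^n$ and in $v^n$ marginally) inside a vanishing remainder is the routine but somewhat tedious bookkeeping.
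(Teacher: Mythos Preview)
The paper does not give its own proof of this statement; it is quoted from \cite{YassaeeArGo14} as a black-box tool for the achievability arguments, so there is nothing in the paper to compare against. Your second-moment strategy is indeed the standard route to this result, but there is a real gap in the order of operations.

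You apply Jensen to the \emph{full} sum $P(v^n,b_i)=\sum_{u^n}p(u^n,v^n)\mathbbm{1}\{\B_i(u^n)=b_i\}$ and only afterwards split $\sum_{u^n}p(u^n|v^n)^2$ into a sub-threshold part and a conditionally-atypical part. At that point the atypical contribution is already sitting under a square root that has been multiplied by $2^{nR_i/2}$, so you would need
\[
2^{nR_i/2}\sum_{v^n}p(v^n)\sqrt{\Pr\bigl[p(U^n|v^n)>2^{-n(H(U|V)-\delta)}\,\big|\,V^n=v^n\bigr]}\;\longrightarrow\;0.
\]
Even after Cauchy--Schwarz this asks that the atypicality probability decay faster than $2^{-nR_i}$. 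By Cram\'er's theorem it decays like $2^{-n c(\delta)}$ with $c(\delta)\to 0$ as $\delta\to 0$; since you must also take $\delta<H(U|V)-R_i$, there is no guarantee that $c(\delta)>R_i$, and when $R_i$ is close to $H(U|V)$ the bound blows up. Your remark about ``absorbing the small marginal mass of atypical $v^n$ using $\|\cdot\|_1\le 1$'' does not rescue this: the bad term comes from \emph{typical} $v^n$ with conditionally atypical $u^n$, and it already carries the $2^{nR_i/2}$ prefactor. The fix is to peel off the atypical mass \emph{before} the variance step: write $P=P_{\mathrm{typ}}+P_{\mathrm{atyp}}$ according to whether $p(u^n|v^n)\le 2^{-n(H(U|V)-\delta)}$, use the triangle inequality to get
\[
\mathbb{E}\|P-p\,p^U\|_1\;\le\;\sum_{v^n,b_i}\mathbb{E}\bigl|P_{\mathrm{typ}}-\mathbb{E}P_{\mathrm{typ}}\bigr|\;+\;2\Pr\bigl[p(U^n|V^n)>2^{-n(H(U|V)-\delta)}\bigr],
\]
and then run your variance computation on $P_{\mathrm{typ}}$ alone, where $\sum_{u^n}p(u^n|v^n)^2\le 2^{-n(H(U|V)-\delta)}$ holds termwise. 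The second summand is now a genuine $o(1)$ with no exponential prefactor, and the first gives exactly your $2^{-n(H(U|V)-R_i-\delta)/2}$.
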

Now we write the achievability statement of Slepian-Wolf theorem in a different equivalent from. Note that achievability proof of the Slepian-Wolf theorem \cite[Section 10.3.2]{ElgamalKim11} gives (i) a constraint on the rate $R_1+R_2$ that allows reconstruction of $U^n$ from the bin indices $\B_1(U^n),\B_2(U^n)$ and $V^n$, and (ii) a decoder (with respect to any fixed binning), which outputs $\hat{u}^n$ in the presence of bin $(b_1,b_2)$ and the sequence $v^n$, such that the probability of reconstruction error goes to zero as $n$ tends to infinity.
Here we denote the decoder by the random conditional p.m.f. $P^{\text{SW}}(\hat{u}^n|v^n,b_1,b_2)$. 
Since the decoder is a function, this random p.m.f. $P^{\text{SW}}(\hat{u}^n|v^n,b_1,b_2)$ takes only two values 0 and 1: for a Slepian-Wolf decoder that uses a jointly typical decoder, $P^{\text{SW}}(\hat{u}^n|v^n,b_1,b_2)=1$ if $\hat{u}^n$ is the unique jointly typical sequence with $v^n$ in the bin $(b_1,b_2)$. If there does not exists the unique sequence $\hat{u}^n$ in the bin $(b_1,b_2)$ that is jointly typical with $v^n$, then $\hat{u}^n$ is taken to be a fixed arbitrary sequence.
\begin{lem}{\cite[Lemma 1]{YassaeeArGo14}}\label{lem:slepian-wolf-condition}
If $R_1+R_2>H(U|V)$, then as $n$ goes to infinity, we have
\begin{align*}
\mathbb{E}_{\B}[\|P(u^n,v^n,\hat{u}^n)-p(u^n,v^n)\times\mathbbm{1}_{\{\hat{u}^n=u^n\}}\|_1]\to0,
\end{align*}
where $\B$ is the set of all random mappings $\B_i:\U^n\to[2^{nR_i}], i\in\{1,2\}$.
\end{lem}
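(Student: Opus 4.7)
The plan is to reduce the statement to the classical Slepian-Wolf achievability bound by identifying the target total-variation gap with the decoding error probability, and then applying the standard random-binning error analysis.

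First, I would observe that the induced random pmf has the exact marginal $p(u^n,v^n)$ on $(U^n,V^n)$ for every realization of $\B$: summing out $(b_1,b_2)$ replaces $\prod_i \mathbbm{1}_{\{\B_i(u^n)=b_i\}}$ by $1$. Writing $P(u^n,v^n,\hat{u}^n)=p(u^n,v^n)\,P(\hat{u}^n\mid u^n,v^n)$ and using the elementary identity $\sum_{\hat{u}^n}|q(\hat{u}^n)-\mathbbm{1}_{\{\hat{u}^n=u^n\}}|=2(1-q(u^n))$ for any pmf $q$, one obtains (with TV normalized by $\tfrac{1}{2}$)
\[
\bigl\|P(u^n,v^n,\hat{u}^n)-p(u^n,v^n)\,\mathbbm{1}_{\{\hat{u}^n=u^n\}}\bigr\|_1 \;=\; \Pr\!\bigl[\hat{U}^n\neq U^n\bigr].
\]
Taking expectation over $\B$ then reduces the lemma to showing that the expected error probability of the jointly-typical Slepian-Wolf decoder vanishes when $R_1+R_2>H(U|V)$.

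The second step is the standard random-binning analysis. Fix $\epsilon>0$ and let $\mathcal{T}_\epsilon^{(n)}$ denote the jointly $\epsilon$-typical set for $p_{UV}$. The decoder outputs the unique $u^n$ satisfying $\B_1(u^n)=b_1$, $\B_2(u^n)=b_2$, and $(u^n,v^n)\in\mathcal{T}_\epsilon^{(n)}$, and declares error otherwise. An error occurs only if (i) $(U^n,V^n)\notin\mathcal{T}_\epsilon^{(n)}$, whose probability tends to zero by the law of large numbers, or (ii) there exists $\tilde{u}^n\neq U^n$ sharing both bin indices with $U^n$ and satisfying $(\tilde{u}^n,V^n)\in\mathcal{T}_\epsilon^{(n)}$. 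For each fixed $v^n$ there are at most $2^{n(H(U|V)+\delta(\epsilon))}$ sequences $\tilde{u}^n$ jointly typical with it; the uniform and independent bin assignments make the probability that any given $\tilde{u}^n\neq U^n$ matches both bin indices of $U^n$ equal to $2^{-n(R_1+R_2)}$. A union bound therefore gives $\mathbb{E}_\B[\Pr(\text{event (ii)})] \leq 2^{-n(R_1+R_2-H(U|V)-\delta(\epsilon))}$, which vanishes whenever $R_1+R_2>H(U|V)$ and $\epsilon$ is chosen small enough.

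Combining the two steps yields the desired $\mathbb{E}_\B[\|P(u^n,v^n,\hat{u}^n)-p(u^n,v^n)\mathbbm{1}_{\{\hat{u}^n=u^n\}}\|_1]\to 0$. The only mildly delicate point is the reduction in the first step, since one must verify that the ``random pmf'' manipulations commute with the expectation over $\B$ and that the binning leaves the $(u^n,v^n)$-marginal intact; beyond that, everything is textbook Slepian-Wolf and no genuine obstacle arises.
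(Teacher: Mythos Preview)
Your proposal is correct and matches the paper's treatment: the paper does not give a self-contained proof of this lemma but, in the paragraph immediately preceding it, explains that it is nothing more than the Slepian--Wolf achievability statement rewritten in total-variation form and cites \cite[Section~10.3.2]{ElgamalKim11} for the standard random-binning error analysis. Your two steps---reducing the total-variation gap to $\Pr[\hat{U}^n\neq U^n]$ via the identity $\sum_{\hat{u}^n}|q(\hat{u}^n)-\mathbbm{1}_{\{\hat{u}^n=u^n\}}|=2(1-q(u^n))$, and then bounding the expected error probability by the textbook typicality-plus-union-bound argument---are exactly that reduction carried out explicitly, so there is nothing to add or correct.
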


\section{Details omitted from \Sectionref{problem-defn}}\label{app:one-round}
\begin{proof}[Proof of \Lemmaref{as-iff-ps}]
This can be proved along the lines of the proof of Theorem 3 in \cite{Data16}.
\end{proof}

\section{Achievability proof of \Theoremref{as_rate}}\label{app:achievability}
\noindent{\bf Achievability:} Our achievable scheme uses the OSRB (output statistics of random binning) framework developed by Yassaee, Aref, and Gohari \cite{YassaeeArGo14}. According to the OSRB framework, we divide our achievability proof in three parts: in part (1) we define two protocols, protocol A and protocol B. Each of these protocols will induce a p.m.f. on random variables defined during the protocol (see \Figureref{combined}). Protocol A corresponds to the source coding side of the problem (see \Figureref{source-coding}) and does not lead to a coding algorithm; protocol B almost gives a coding scheme, except for the fact that it is assisted with shared randomness (see \Figureref{protocol})), which does not exist in the problem that we are trying to solve. In part (2) our goal is to find conditions that will make these two distributions almost identical. Once we find such conditions (which will make protocol A and protocol B almost identical) we can investigate the correctness and privacy properties, that we want protocol B to satisfy, in protocol A. Notice that protocol B gives a coding scheme only if there is no shared randomness. We get rid of this in part (3) by finding an instance of a shared randomness in protocol B, conditioned on which both correctness and privacy conditions are still preserved. Details are given below.

\begin{figure*}[h]
\begin{subfigure}{0.5\textwidth}
\includegraphics[scale=0.75]{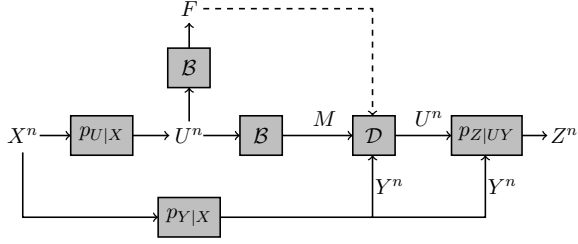}
\caption{Source coding side of the problem}
\label{fig:source-coding}
\end{subfigure}
\begin{subfigure}{0.5\textwidth}\centering
\includegraphics[scale=0.75]{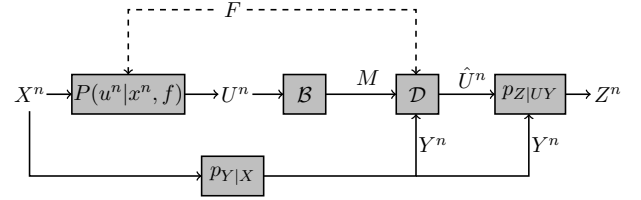}
\caption{Coding for the actual problem assisted with the extra shared randomness}
\label{fig:protocol}
\end{subfigure}
\caption{
(\Figureref{source-coding}.) The source coding side of the problem (Protocol A). Given an i.i.d. source $(X^n,Y^n)$, we pass the $X^n$ sequence through a virtual DMC $p_{U|X}$ to get an i.i.d. sequence $U^n$, and then we pass $(U^n,Y^n)$ sequence through another virtual DMC $p_{Z|UY}$ to get an i.i.d. sequence $Z^n$. Here, $(X^n,Y^n,Z^n)$ have the desired i.i.d. distribution according to $p_{X^nY^nZ^n}(x^n,y^n,z^n)=\Pi_{i=1}^np_{XYZ}(x_i,y_i,z_i)$. Since the decoder needs access to the $U^n$ sequence to produce $Z^n$, we have to describe the $U^n$ sequence to decoder. We do this by assigning two random and independent bins $M$ and $F$ to $U^n$ sequences, where $M$ will serve as the message from Alice to Bob in the actual problem, and $F$ will serve as extra randomness. Bob uses the Slepian-Wolf decoder for estimating $U^n$ from $(M,F,Y^n)$. If the SW constraint in \eqref{eq:part2-constraint3} is satisfied, Bob will be able to estimate the $U^n$ sequence using the SW decoder with low probability of error. Note that the $U^n$ sequence that we are feeding into the DMC $p_{Z|UY}$ is not equal to the output of the SW decoder, but it is the i.i.d. $U^n$ sequence that we obtained earlier by passing $X^n$ through the DMC $p_{U|X}$. (\Figureref{protocol}.) Coding for the actual problem assisted with extra shared randomness (Protocol B). Given an i.i.d. source $(X^n,Y^n)$, we pass $X^n$ and the extra shared randomness $F$ through the reverse encoder $P(u^n|x^n,f)$ (which is obtained from protocol A) to get a $U^n$ sequence ($U_i$'s may not be i.i.d.). Alice uses the p.m.f. obtained from protocol A to map this $U^n$ sequence to a message $M$ and sends it to Bob. Bob uses the SW decoder of protocol A to decode $\hat{U}^n$ from $(M,F,Y^n)$, and produces a $Z^n$ sequence (where $Z_i$'s may not be i.i.d.) by passing $(\hat{U}^n,Y^n)$ through $p_{Z|UY}$. Note that in protocol B, $F$ and $(X^n,Y^n)$ are independent, whereas in protocol A, they may not be. To make the two p.m.f.'s (induced by protocol A and protocol B) almost identical, we need to have the asymptotic independence between $F$ and $(X^n,Y^n)$ in protocol A, which is ensured by \eqref{eq:part2-constraint1}. Finally, since we do not have any shared randomness in the problem that we trying to solve, we must eliminate $F$ by conditioning on an instance of shared randomness without disturbing the desired joint distribution of $(X^n,Y^n,Z^n)$. This is ensured by \eqref{eq:part2-constraint2}, which makes $F$ and $(X^n,Y^n,Z^n)$ almost independent in protocol A.
}
\label{fig:combined}
\end{figure*}

\noindent {\it Part (1) of the proof:} We will define two protocols here, one of which is related to the source coding side of the problem (see \Figureref{source-coding}) and the other one is related to the actual protocol (with shared randomness) (see \Figureref{protocol}).

\textit{Protocol A:} Fix a $p_{U|XYZ}$ that achieves the minimum in the expression for $R_S$ in \Theoremref{as_rate}. Define $p_{UXYZ}(u,x,y,z):=p_{XYZ}(x,y,z)\times p_{U|XYZ}(u|x,y,z)$. Note that $p_{UXYZ}(u,x,y,z)=p_{XY}(x,y)\times p_{U|X}(u|x)\times p_{Z|UY}(z|u,y)$ such that $U-(Y,Z)-X$ is a Markov chain.
Now consider $(U^n,X^n,Y^n,Z^n)$, where $(U_i,X_i,Y_i,Z_i)$'s are i.i.d. according to $p_{UXYZ}$. To make the notation less cluttered, in the following we write $p(u^n,x^n,y^n,z^n)$ to mean $\Pi_{i=1}^n p_{UXYZ}(u_i,x_i,y_i,z_i)$.
\begin{enumerate}
\item Map each sequence $u^n\in\U^n$ into two bins $f\in_R[2^{nR'}]$ and $m\in_R[2^{nR_M}]$, independently and uniformly at random.
\item We will use Slepian-Wolf decoder to estimate $u^n$ from $(f,m,y^n)$.
\end{enumerate}
In part (2) and part (3) of the proof, we will impose constraints on the rates $R'$ and $R_M$ that will imply that $R_M\geq I(Z;U|Y)$. Now we will see the joint distribution induced by the protocol A as depicted in \Figureref{source-coding}. We denote the random p.m.f. (p.m.f. is random because binning is random) induced by the protocol A by $P$ as follows:
{\allowdisplaybreaks
\begin{align}
P(x^n,y^n,&u^n,z^n,f,m,\hat{u}^n)  \notag \\
&= p(x^n,y^n,u^n)\times p(z^n|u^n,y^n)\times P(f|u^n)\times P(m|u^n)\times P^{\text{SW}}(\hat{u}^n|f,m,y^n) \label{eq:protocol-A-interim1} \\
&= p(x^n,y^n,u^n)\times P(f|u^n)\times P(m|u^n)\times P^{\text{SW}}(\hat{u}^n|f,m,y^n)\times p(z^n|u^n,y^n) \notag \\
&= P(x^n,y^n,u^n,f)\times P(m|u^n)\times P^{\text{SW}}(\hat{u}^n|f,m,y^n)\times p(z^n|u^n,y^n) \notag \\
&= P(x^n,y^n,f)\times P(u^n|x^n,y^n,f)\times P(m|u^n)\times P^{\text{SW}}(\hat{u}^n|f,m,y^n)\times p(z^n|u^n,y^n) \notag \\
&= P(x^n,y^n,f)\times P(u^n|x^n,f)\times P(m|u^n)\times P^{\text{SW}}(\hat{u}^n|f,m,y^n)\times p(z^n|u^n,y^n) \label{eq:protocol-A}
\end{align}
In \eqref{eq:protocol-A-interim1} we used the fact that $p(u,x,y,z)$ satisfies $Z-(Y,U)-X$. In \eqref{eq:protocol-A} we used the Markov chain $U^n-(X^n,F)-Y^n$ (which is shown below) to write $P(u^n|x^n,y^n,f)=P(u^n|x^n,f)$.
\begin{align*}
I(U^n;Y^n|X^n,F) &\leq I(U^n,F;Y^n|X^n) \\
&= I(U^n;Y^n|X^n) + I(F;Y^n|U^n,X^n) \\
&\leq I(U^n;Y^n|X^n) + I(F;X^n,Y^n|U^n) \\ 
&= 0
\end{align*}
where in the last equality we used $I(U^n;Y^n|X^n)=0$ (which follows because $I(U^n;Y^n|X^n)=n I(U;Y|X)$, and that  $U-X-Y$ is a Markov chain) and $I(F;X^n,Y^n|U^n)=0$ (which follows from the fact that conditioned on $U^n$ the bin index $F$ is independent of $(X^n,Y^n)$). \\
}

\textit{Protocol B:} In this protocol we assume that Alice and Bob have access to shared randomness, denoted by $F$, which is uniformly distributed in $[2^{nR'}]$. Note that $F$ is independent of $(X^n,Y^n)$.
\begin{enumerate}
\item Alice samples a sequence $u^n$ according to $P(u^n|x^n,f)$ and then the bin index $m$ according to $P(m|u^n)$ (where these distributions are defined in protocol A) and sends $m$ to Bob.
\item Bob, having access to $f,m,y^n$ and a knowledge of the binning, uses the Slepian-Wolf decoder $P^{\text{SW}}(\hat{u}^n|f,m,y^n)$ (from protocol A) to obtain $\hat{u}^n$ -- an estimate of $u^n$ -- and outputs $z^n$ according to $p(z^n|\hat{u}^n,y^n)$ (which is equal to $\Pi_{i=1}^np_{Z|UY}(z_i|\hat{u}_i,y_i)$).
\end{enumerate}
Let $\hat{P}(x^n,y^n,u^n,z^n,f,m,\hat{u}^n)$ denote the random p.m.f. induced by the protocol B, which is described as follows:
\begin{align}
\hat{P}(x^n,y^n,u^n,z^n,f,m,\hat{u}^n) &= p(x^n,y^n)\times p^{U}(f)\times P(u^n|x^n,f)\times P(m|u^n) \notag \\ 
&\hspace{3cm} \times P^{\text{SW}}(\hat{u}^n|f,m,y^n)\times p(z^n|\hat{u}^n,y^n) \label{eq:protocol-B}
\end{align}

\noindent \textit{Part (2) of the proof:} Here we find some constraints on the rates $R_M,R'$ that will make the p.m.f.'s $P$ (from protocol A) and $\hat{P}$ (from protocol B) close in total variation distance. It suffices to find constraints that will make $P(x^n,y^n,f)$ close to $p(x^n,y^n)\times p^U(f)$ and the Slepian-Wolf decoder to reliably decode the sequence $U^n$.
\begin{enumerate}
\item Note that $f$ is a bin index of the sequence $u^n$. So, it follows from \Theoremref{source-bin-indep} (by putting $V=(X,Y)$) that if
\begin{align}
R'<H(U|X,Y),\label{eq:part2-constraint1}
\end{align} 
then $P(x^n,y^n,f)\approx p^U(f)\times p(x^n,y^n)=\hat{P}(x^n,y^n,f)$. This implies (by marginalizing $z^n$ away from \eqref{eq:protocol-A} and \eqref{eq:protocol-B}) that
\begin{align}
P(x^n,y^n,u^n,f,m,\hat{u}^n) \approx \hat{P}(x^n,y^n,u^n,f,m,\hat{u}^n). \label{eq:part2-interim1}
\end{align}
Note that asymptotic independence of $F$ and $(X^n,Y^n)$ is enough to establish \eqref{eq:part2-interim1}. However, we will need asymptotic independence of $F$ and $(X^n,Y^n,Z^n)$ in $P$ (in part (3) of the proof) to ensure correctness of our protocol when we remove the shared randomness between Alice and Bob in protocol B (after fixing a binning in protocol A). It follows from \Theoremref{source-bin-indep} (by putting $V=(X,Y,Z)$) that if
\begin{align}
R' < H(U|X,Y,Z), \label{eq:part2-constraint2}
\end{align}
then 
\begin{align}
P(x^n,y^n,z^n,f)\approx p^U(f)\times p(x^n,y^n,z^n). \label{eq:part2-p.m.f.-approx1}
\end{align}
Note that \eqref{eq:part2-constraint2} implies \eqref{eq:part2-constraint1}.
\vspace{0.25cm}
\item It follows from \Lemmaref{slepian-wolf-condition} (by putting $V=Y$) that if 
\begin{align}
R'+R_M>H(U|Y) \label{eq:part2-constraint3}
\end{align}
then the Slepian-Wolf decoder of protocol A will decode the sequence $U^n$ with low probability of error, i.e., 
\begin{align}
P(x^n,y^n,u^n,f,m,\hat{u}^n) \approx P(x^n,y^n,u^n,f,m)\times \mathbbm{1}_{\{\hat{u}^n=u^n\}}. \label{eq:part2-interim2}
\end{align}
\end{enumerate}
\eqref{eq:part2-interim1} and \eqref{eq:part2-interim2} imply
\begin{align}
\hat{P}(x^n,y^n,u^n,f,m,\hat{u}^n) \approx P(x^n,y^n,u^n,f,m)\times \mathbbm{1}_{\{\hat{u}^n=u^n\}}. \label{eq:part2-interim3}
\end{align}
Now, using \eqref{eq:part2-interim2} and \eqref{eq:part2-interim3} we can show that the p.m.f.'s $P$ and $\hat{P}$ are close in total variation distance as follows:
{\allowdisplaybreaks
\begin{align}
\hat{P}(x^n,y^n,u^n,z^n,f,m,\hat{u}^n) &= \hat{P}(x^n,y^n,u^n,f,m,\hat{u}^n)\times p(z^n|\hat{u}^n,y^n) \notag \\
&\approx P(x^n,y^n,u^n,f,m)\times \mathbbm{1}_{\{\hat{u}^n=u^n\}}\times p(z^n|\hat{u}^n,y^n) \label{eq:part2-interim4} \\
&= P(x^n,y^n,u^n,f,m)\times \mathbbm{1}_{\{\hat{u}^n=u^n\}}\times p(z^n|u^n,y^n) \notag \\
&\approx P(x^n,y^n,u^n,f,m,\hat{u}^n)\times p(z^n|u^n,y^n) \label{eq:part2-interim5} \\
&= P(x^n,y^n,u^n,z^n,f,m,\hat{u}^n) \label{eq:part2-interim6}
\end{align}
\eqref{eq:part2-interim4} and \eqref{eq:part2-interim5} follow from \eqref{eq:part2-interim3} and \eqref{eq:part2-interim2}, respectively. Marginalizing $u^n,\hat{u}^n$ away from \eqref{eq:part2-interim6} gives
\begin{align}
\hat{P}(x^n,y^n,z^n,f,m) \approx P(x^n,y^n,z^n,f,m). \label{eq:part2-p.m.f.-approx2}
\end{align}
}

\noindent \textit{Part (3) of the proof:} Note that there is no shared randomness between Alice and Bob in the problem that we are trying to solve. Now we find an instance $f$ of the shared randomness, conditioned on which $\hat{P}(x^n,y^n,u^n,z^n,m|f)$ satisfies correctness and privacy.
It follows from \eqref{eq:part2-p.m.f.-approx1} and \eqref{eq:part2-p.m.f.-approx2} that \[\lim_{n\to\infty}\mathbb{E}_{\B}[\|P(x^n,y^n,z^n,f)-p^U(f)\times p(x^n,y^n,z^n)\|_1+\|\hat{P}(x^n,y^n,z^n,f,m)-P(x^n,y^n,z^n,f,m)\|_1]=0,\] where expectation is taken over random binning.
This implies that there exists a fixed binning (with the corresponding p.m.f. $\bar{p}$) such that if we replace $P$ with $\bar{p}$ in \eqref{eq:protocol-B} and denote the resulting p.m.f. with $\hat{p}$, we have
\begin{align}
\bar{p}(x^n,y^n,z^n,f) &\approx p^U(f)\times p(x^n,y^n,z^n), \label{eq:part3-interim0} \\
\hat{p}(x^n,y^n,z^n,f,m) &\approx \bar{p}(x^n,y^n,z^n,f,m). \label{eq:part3-interim1}
\end{align}
\eqref{eq:part3-interim0} and \eqref{eq:part3-interim1} imply (proved in \Claimref{existence-of-f} at the end of part (3)) that there exists an $f$ such that $\bar{p}(f)>0$ and
\begin{align}
\bar{p}(x^n,y^n,z^n|f) &\approx p(x^n,y^n,z^n), \label{eq:part3-interim15} \\
\hat{p}(x^n,y^n,z^n,m|f) &\approx \bar{p}(x^n,y^n,z^n,m|f). \label{eq:part3-interim2}
\end{align}
Note that $\hat{p}(f)=p^U(f)>0$ for every $f$, which implies that, for all $f$, if $\bar{p}(f)>0$, then $\hat{p}(f)>0$. So, all the terms in \eqref{eq:part3-interim15} and \eqref{eq:part3-interim2} are well-defined.
Now we show that $\hat{p}(x^n,y^n,z^n,m|f)$ from \eqref{eq:part3-interim2} satisfies correctness and privacy conditions. \\

Correctness: Marginalizing $m$ away from \eqref{eq:part3-interim2} and using \eqref{eq:part3-interim15} imply the correctness condition, i.e., $\hat{p}(x^n,y^n,z^n|f)\approx p(x^n,y^n,z^n)$. \\

Privacy: In order to show privacy, i.e., $I(M;X^n|Y^n,Z^n)|_{\hat{p}(x^n,y^n,z^n,m|f)}\to0$, as $n\to\infty$, we first show $\bar{p}(x^n,y^n,z^n,f,m)=p(x^n,y^n,z^n)\times \bar{p}(f,m|y^n,z^n)$:
{\allowdisplaybreaks
\begin{align}
\bar{p}(x^n,y^n,z^n,f,m) &= \sum_{u^n} \bar{p}(x^n,y^n,z^n,u^n,f,m) \notag \\
&= \sum_{u^n} p(x^n,y^n,z^n)\times p(u^n|x^n,y^n,z^n)\times \bar{p}(f,m|u^n) \notag \\
&\stackrel{\text{(a)}}{=} \sum_{u^n} p(x^n,y^n,z^n)\times p(u^n|y^n,z^n)\times \bar{p}(f,m|u^n) \notag \\
&= \sum_{u^n} p(x^n,y^n,z^n)\times \bar{p}(u^n,f,m|y^n,z^n) \notag \\
&= p(x^n,y^n,z^n)\times \bar{p}(f,m|y^n,z^n) \notag
\end{align}
where (a) follows from the fact that $p(u^n,x^n,y^n,z^n)=\Pi_{i=1}^np(u_i,x_i,y_i,z_i)$ and that $p(u,x,y,z)$ satisfies the Markov chain $U-(Y,Z)-X$. So we have
\begin{align*}
& I(F,M;X^n|Y^n,Z^n)|_{\bar{p}(x^n,y^n,z^n,f,m)}=0 \\
\implies & I(M;X^n|Y^n,Z^n,F)|_{\bar{p}(x^n,y^n,z^n,f,m)}=0 \\
\implies & I(M;X^n|Y^n,Z^n,F=f)|_{\bar{p}(x^n,y^n,z^n,f,m)}=0, \quad\forall f, \text{ s.t. } \bar{p}(f)>0 \\
\implies & I(M;X^n|Y^n,Z^n)|_{\bar{p}(x^n,y^n,z^n,m|f)}=0, \quad\forall f, \text{ s.t. } \bar{p}(f)>0 \\
\end{align*}
In particular, $I(M;X^n|Y^n,Z^n)|_{\bar{p}(x^n,y^n,z^n,m|f)}=0$ with the specific $f$ in \eqref{eq:part3-interim2}.
}
\vspace{0.25cm}
We have $\hat{p}(x^n,y^n,z^n,m|f) \approx \bar{p}(x^n,y^n,z^n,m|f)$ and $I(M;X^n|Y^n,Z^n)|_{\bar{p}(x^n,y^n,z^n,m|f)}=0$. Now, since mutual information is a continuous function of the distribution, we have 
\[I(M;X^n|Y^n,Z^n)|_{\hat{p}(x^n,y^n,z^n,m|f)}\to0, \text{ as } n\to\infty.\]

It follows from \eqref{eq:part2-constraint2} and \eqref{eq:part2-constraint3} that, for every $R_M > I(X,Z;U|Y)$ there exists $R'>0$ such that $R'<H(U|X,Y,Z)$ and $R_M+R'>H(U|Y)$ hold, which implies existence of a secure coding scheme by the above analysis with rate $R_M$. Since $U-(Y,Z)-X$ is a Markov chain, we have $R_M > I(Z;U|Y)$.
We have to give a pair of encoder and decoder that results in a distribution that satisfies correctness and privacy conditions. The encoder is specified by $\bar{p}(m|x^n,f)=\sum_{u^n:m=m(u^n)}\bar{p}(u^n|x^n,f)$, and the decoder is specified by $(\bar{p}^{\text{SW}}(\hat{u}^n|f,m,y^n),p(z^n|\hat{u}^n,y^n))$.

\begin{claim}\label{claim:existence-of-f}
Let $\epsilon_n,\delta_n\to0$ be such that
\begin{align}
\bar{p}(x^n,y^n,z^n,f) &\stackrel{\epsilon_n}{\approx} p^U(f)\times p(x^n,y^n,z^n), \label{eq:part3-interim11} \\
\hat{p}(x^n,y^n,z^n,f,m) &\stackrel{\delta_n}{\approx} \bar{p}(x^n,y^n,z^n,f,m) \label{eq:part3-interim12}
\end{align}
hold. Then there exists an $f$ such that $\bar{p}(f)>0$ and 
\begin{align*}
\bar{p}(x^n,y^n,z^n|f) &\stackrel{3\epsilon_n+\delta_n}{\approx} p(x^n,y^n,z^n), \\
\hat{p}(x^n,y^n,z^n,m|f) &\stackrel{3\epsilon_n+\delta_n}{\approx} \bar{p}(x^n,y^n,z^n,m|f).
\end{align*}
\end{claim}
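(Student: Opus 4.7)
The plan is to exhibit a good $f$ via averaging: show that
\[
\mathbb{E}_{f \sim \bar{p}}\!\left[\,\|\bar{p}(x^n,y^n,z^n\!\mid\!f) - p(x^n,y^n,z^n)\|_1 \;+\; \|\hat{p}(x^n,y^n,z^n,m\!\mid\!f) - \bar{p}(x^n,y^n,z^n,m\!\mid\!f)\|_1\,\right] \;\leq\; 3\epsilon_n + \delta_n,
\]
so that some $f$ in the support of $\bar{p}$ (which automatically gives $\bar{p}(f)>0$, hence both conditionals are well-defined) achieves this on the sum, and since both summands are nonnegative each is individually bounded by $3\epsilon_n+\delta_n$.

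To bound the first term, I would multiply through by $\bar{p}(f)$ to clear the conditioning denominator:
\[
2\bar{p}(f)\,\|\bar{p}(\cdot\!\mid\!f)-p(\cdot)\|_1 = \sum_{x^n,y^n,z^n}\bigl|\bar{p}(x^n,y^n,z^n,f) - \bar{p}(f)\,p(x^n,y^n,z^n)\bigr|,
\]
then insert $\pm p^U(f)\,p(x^n,y^n,z^n)$ and apply the triangle inequality. Summing over $f$ bounds the first piece by $2\epsilon_n$ using \eqref{eq:part3-interim11}, and the second piece by $\sum_f|p^U(f)-\bar{p}(f)|\le 2\epsilon_n$ (the marginal of \eqref{eq:part3-interim11}). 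Dividing by $2$ gives $\sum_f \bar{p}(f)\|\bar{p}(\cdot\!\mid\!f)-p(\cdot)\|_1 \leq 2\epsilon_n$.

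For the second term, the same trick works once we observe from \eqref{eq:protocol-B} that $\hat{p}(f)=p^U(f)$ for every $f$. Multiplying by $\bar{p}(f)$:
\[
2\bar{p}(f)\,\|\hat{p}(\cdot\!\mid\!f)-\bar{p}(\cdot\!\mid\!f)\|_1 = \sum_{x^n,y^n,z^n,m}\Bigl|\tfrac{\bar{p}(f)}{p^U(f)}\hat{p}(x^n,y^n,z^n,f,m) - \bar{p}(x^n,y^n,z^n,f,m)\Bigr|.
\]
Inserting $\pm\hat{p}(x^n,y^n,z^n,f,m)$ and using the triangle inequality gives one piece bounded via \eqref{eq:part3-interim12} by $\sum|\hat{p}(\cdot,f)-\bar{p}(\cdot,f)|$ and a second piece of the form $\frac{|\bar{p}(f)-p^U(f)|}{p^U(f)}\sum_{\cdot}\hat{p}(\cdot,f)=|\bar{p}(f)-p^U(f)|$, where the ratio $\bar{p}(f)/p^U(f)$ collapses cleanly precisely because $\hat{p}(f)=p^U(f)$. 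Summing over $f$ gives $\sum_f\bar{p}(f)\|\hat{p}(\cdot\!\mid\!f)-\bar{p}(\cdot\!\mid\!f)\|_1 \leq \delta_n + \epsilon_n$. Adding the two bounds produces $3\epsilon_n+\delta_n$, and the averaging argument finishes the proof.

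The main obstacle—and the reason the naive bound via Markov on $f\sim p^U$ fails to give the sharp constant while also ensuring $\bar{p}(f)>0$—is that the conditionals $\bar{p}(\cdot\!\mid\!f)$ and $\hat{p}(\cdot\!\mid\!f)$ involve division by quantities of order $2^{-nR'}$, so a pointwise bound at a fixed $f$ would blow up. Averaging specifically under $\bar{p}$ rather than $p^U$ resolves both issues simultaneously: the factor $\bar{p}(f)$ cancels the denominator in the first TV, the identity $\hat{p}(f)=p^U(f)$ is what tames the second, and the choice of sampling distribution guarantees the output $f$ lies in $\operatorname{supp}(\bar{p})$.
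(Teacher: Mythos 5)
Your proof is correct and follows essentially the same route as the paper: both arguments multiply the conditional total-variation distances by $\bar{p}(f)$ to clear the denominators, exploit $\hat{p}(f)=p^U(f)$ together with the marginal bound $\|\bar{p}(f)-p^U(f)\|_1\le\epsilon_n$, apply the triangle inequality to split into a $2\epsilon_n$ piece and an $\epsilon_n+\delta_n$ piece, and then select a good $f$ from $\operatorname{supp}(\bar{p})$ by averaging. The closing remark about why one must average under $\bar{p}$ rather than $p^U$ is a useful clarification but does not change the argument.
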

\begin{proof}
Note that \eqref{eq:part3-interim11} implies $\bar{p}(f)\stackrel{\epsilon_n}{\approx} p^U(f)$ and \eqref{eq:protocol-B} implies $\hat{p}(f)=p^U(f)$. Putting these back in \eqref{eq:part3-interim11} and \eqref{eq:part3-interim12} gives:
\begin{align}
\bar{p}(f)\times \bar{p}(x^n,y^n,z^n|f) &\stackrel{2\epsilon_n}{\approx} \bar{p}(f)\times p(x^n,y^n,z^n), \label{eq:part3-interim121} \\
\bar{p}(f)\times\hat{p}(x^n,y^n,z^n,m|f) &\stackrel{\epsilon_n+\delta_n}{\approx} \bar{p}(f)\times \bar{p}(x^n,y^n,z^n,m|f). \label{eq:part3-interim122}
\end{align}
\eqref{eq:part3-interim121} and \eqref{eq:part3-interim122} imply
\begin{align}
\sum_f \bar{p}(f)\times \|\bar{p}(x^n,y^n,z^n|f)-p(x^n,y^n,z^n)\|_1 &\leq 2\epsilon_n, \label{eq:part3-interim123} \\
\sum_f \bar{p}(f)\times \|\hat{p}(x^n,y^n,z^n,m|f)-\bar{p}(x^n,y^n,z^n,m|f)\|_1 &\leq \epsilon_n+\delta_n. \label{eq:part3-interim124}
\end{align}
\eqref{eq:part3-interim123} and \eqref{eq:part3-interim124} imply
\begin{align}
\sum_f \bar{p}(f)\times \Big[\|\bar{p}(x^n,y^n,z^n|f)-p(x^n,y^n,z^n)\|_1 +  &\|\hat{p}(x^n,y^n,z^n,m|f)-\bar{p}(x^n,y^n,z^n,m|f)\|_1\Big] \notag \\
& \leq 3\epsilon_n+\delta_n. \label{eq:part3-interim125}
\end{align}
Now \eqref{eq:part3-interim125} implies that there exists an $f$ such that $\bar{p}(f)>0$ and 
\begin{align}
[\|\bar{p}(x^n,y^n,z^n|f)-p(x^n,y^n,z^n)\|_1 +  &\|\hat{p}(x^n,y^n,z^n,m|f)-\bar{p}(x^n,y^n,z^n,m|f)\|_1 \leq 3\epsilon_n+\delta_n.
\end{align}
This means that each term on the left hand side is upper-bounded by $3\epsilon_n+\delta_n$, which implies our claim.
\end{proof}

\section{Details omitted from the converse of \Theoremref{as_rate} in \Sectionref{proof_general-case}}\label{app:single-letterization}
First we show that $I(X_T,\hat{Z}_T;T|Y_T)\leq\epsilon_3$, where $\epsilon_3\to0$ as $\epsilon\to0$, which implies the inequality in \eqref{eq:1r-rate-interim4}.
\begin{claim}\label{claim:converse-small-claim}
$I(X_T,\hat{Z}_T;T|Y_T)\leq\epsilon_3$, where $\epsilon_3\to0$ as $\epsilon\to0$.
\end{claim}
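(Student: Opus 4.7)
The plan is to exploit the fact that $T$ is independent of $(M,X^n,Y^n,Z^n,\hat{Z}^n)$ and uniform on $[n]$, so that $p_{X_T Y_T \hat{Z}_T \mid T=i} = p_{X_i Y_i \hat{Z}_i}$ for every $i\in[n]$. First, I would note that the correctness condition \eqref{eq:1r-correctness}, together with monotonicity of total variation under marginalization, yields $\|p_{X_i Y_i \hat{Z}_i}-p_{X_i Y_i Z_i}\|_1\le\epsilon$ for every $i\in[n]$; since $(X_i,Y_i,Z_i)$ are i.i.d.\ with joint distribution $p_{XYZ}$, this sharpens to $\|p_{X_i Y_i \hat{Z}_i}-p_{XYZ}\|_1\le\epsilon$ uniformly in $i$.

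Next, I would decompose the target quantity as
\[
I(X_T,\hat{Z}_T;T\mid Y_T)\;=\;H(X_T,\hat{Z}_T\mid Y_T)\;-\;H(X_T,\hat{Z}_T\mid Y_T,T),
\]
and estimate each conditional entropy via the continuity bound \cite[Lemma 2.7]{CsiszarKorner11}. Because $T$ is independent of the rest and uniform on $[n]$, the second term equals $\tfrac{1}{n}\sum_{i=1}^n H(X_i,\hat{Z}_i\mid Y_i)$, and by the uniform distributional closeness established in the first step, each summand lies within some $\eta=\eta(\epsilon)$ of $H(X,Z\mid Y)$ with $\eta\to 0$ as $\epsilon\to 0$. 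For the first term, the averaged marginal $p_{X_T Y_T \hat{Z}_T}=\tfrac{1}{n}\sum_{i=1}^n p_{X_i Y_i \hat{Z}_i}$ is $\epsilon$-close to $p_{XYZ}$ by the triangle inequality (convex combinations preserve the bound), so the same lemma gives $|H(X_T,\hat{Z}_T\mid Y_T)-H(X,Z\mid Y)|\le\eta$; here one also uses that the $Y_T$ marginal coincides with $p_Y$ exactly.

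Combining the two estimates via the triangle inequality yields $I(X_T,\hat{Z}_T;T\mid Y_T)\le 2\eta=:\epsilon_3$, with $\epsilon_3\to 0$ as $\epsilon\to 0$, which is precisely the claim. The only mild technical point is ensuring that the $\eta(\cdot)$ produced by \cite[Lemma 2.7]{CsiszarKorner11} depends only on $\epsilon$ and on $\log(|\X|\cdot|\Y|\cdot|\Z|)$, and in particular not on $i$ or $n$; since the distributional bound $\|p_{X_i Y_i \hat{Z}_i}-p_{XYZ}\|_1\le\epsilon$ holds with the same $\epsilon$ for every $i$, a single $\eta(\epsilon)$ works simultaneously in all the estimates. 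The rest is routine bookkeeping, so I do not expect a real obstacle here.
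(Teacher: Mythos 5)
Your proof is correct and follows essentially the same route as the paper's: expand the mutual information as a difference of conditional entropies, use independence and uniformity of $T$ to rewrite $H(X_T,\hat Z_T\mid Y_T,T)$ as $\frac{1}{n}\sum_i H(X_i,\hat Z_i\mid Y_i)$, and invoke \cite[Lemma 2.7]{CsiszarKorner11} together with the per-coordinate and averaged bounds $\|p_{X_iY_i\hat Z_i}-p_{XYZ}\|_1\le\epsilon$ to control both terms. The only cosmetic difference is that you compare both entropies directly to $H(X,Z\mid Y)$, whereas the paper compares to $H(X_T,Z_T\mid Y_T)$ and $H(X_T,Z_T\mid Y_T,T)$ and then observes these coincide with $H(X,Z\mid Y)$; the content is identical.
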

\begin{proof}
{\allowdisplaybreaks
\begin{align*}
I(X_T,\hat{Z}_T;T|Y_T) &= H(X_T,\hat{Z}_T|Y_T) - H(X_T,\hat{Z}_T|Y_T,T) \\
&\stackrel{\text{(d)}}{\leq} H(X_T,Z_T|Y_T) + \delta_1 - H(X_T,\hat{Z}_T|Y_T,T) \\
&= H(X_T,Z_T|Y_T) + \delta_1 - \sum_{i=1}^n\frac{1}{n}H(X_i,\hat{Z}_i|Y_i,T=i) \\
&\stackrel{\text{(e)}}{=} H(X_T,Z_T|Y_T) + \delta_1 - \sum_{i=1}^n\frac{1}{n}H(X_i,\hat{Z}_i|Y_i) \\
&\stackrel{\text{(f)}}{\leq} H(X_T,Z_T|Y_T) + \delta_1 - \sum_{i=1}^n\frac{1}{n}[H(X_i,Z_i|Y_i)-\delta_2] \\
&\stackrel{\text{(g)}}{=} H(X_T,Z_T|Y_T) + \delta_1 + \delta_2 - \sum_{i=1}^n\frac{1}{n}H(X_i,Z_i|Y_i,T=i) \\
&= H(X_T,Z_T|Y_T) + \delta_1 + \delta_2 - H(X_T,Z_T|Y_T,T) \\
&\stackrel{\text{(h)}}{=} H(X,Z|Y) + \delta_1 + \delta_2 - H(X,Z|Y) \\
&= \epsilon_3, \quad\text{ where } \epsilon_3=\delta_1+\delta_2.
\end{align*}
We used the following fact in (d) and (f): if $V$ and $V'$ are two random variables taking values in the same alphabet $\mathcal{V}$ such that $\|p_V-p_{V'}\|_1\leq\epsilon\leq1/4$, then it follows from \cite[Lemma 2.7]{CsiszarKorner11} that $|H(V)-H(V')|\leq \eta\log|\mathcal{V}|$, where $\eta\to0$ as $\epsilon\to0$.
Now $\|p_{X_TY_T\hat{Z}_T}-p_{X_TY_TZ_T}\|_1\leq\epsilon$ (see \eqref{eq:single_correctness}) implies (d), i.e., $H(X_T,\hat{Z}_T|Y_T)\leq H(X_T,Z_T|Y_T)+\delta_1$, where $\delta_1\to0$ as $\epsilon\to0$. Similarly, $\|p_{X_iY_i\hat{Z}_i}-p_{X_iY_iZ_i}\|_1\leq\epsilon, i\in[n]$ (see \eqref{eq:single_correctness-i}) implies (f), i.e., $H(X_i,\hat{Z}_i|Y_i)\geq H(X_i,Z_i|Y_i)-\delta_{2i}$, where $\delta_{2i}\to0$ as $\epsilon\to0$; take $\delta_2=\max_{i\in[n]}\{\delta_{2i}\}$. In (e) and (g) we used the fact that $T$ is independent of $(M,X^n,Y^n,Z^n,\hat{Z}^n)$. Since $\epsilon_3=\delta_1+\delta_2$, we have $\epsilon_3\to0$ as $\epsilon\to0$. In (h) we used the fact that $p_{X_TY_TZ_T}=p_{XYZ}$ and that $T$ is independent of $(X,Y,Z)$.
}
\end{proof}
\vspace{0.25cm}
Now we single-letterize conditions \eqref{eq:1r-message}-\eqref{eq:1r-correctness}. While doing this we will introduce a time-sharing random variable $T$ that is independent of $(M,X^n,Y^n,Z^n,\hat{Z}^n)$ and uniformly distributed in $\{1,2,\hdots,n\}$. \\

{\it Single-letterizing \eqref{eq:1r-message}:}
{\allowdisplaybreaks
\begin{align*}
0 &= I(M;Y^n|X^n) \\
&= H(Y^n|X^n) - H(Y^n|X^n,M) \\
&= \sum_{i=1}^n [H(Y_i|X_i) - H(Y_i|X^n,M,Y^{i-1})] \\
&\geq \sum_{i=1}^n [H(Y_i|X_i) - H(Y_i|X_i,X^{i-1},M,Y^{i-1})] \\
&\stackrel{\text{(a)}}{=} \sum_{i=1}^n [H(Y_i|X_i) - H(Y_i|X_i,X^{i-1},M,Y^{i-1},Y_{i+1}^n)] \\
&= \sum_{i=1}^n [H(Y_i|X_i) - H(Y_i|X_i,U_i)], \text{ where }U_i=(X^{i-1},M,Y^{i-1},Y_{i+1}^n) \\
&= \sum_{i=1}^n I(U_i;Y_i|X_i) \\
&\stackrel{\text{(b)}}{=} n \sum_{i=1}^n \frac{1}{n} I(U_i;Y_i|X_i,T=i) \\
&= n\cdot I(U_T;Y_T|X_T,T) \\
&= n\cdot I(\underbrace{U_T,T}_{=\ U};Y_T|X_T) \\
&= n\cdot I(U;Y_T|X_T)
\end{align*}
In (a) we used the Markov chain $Y_i - (M,X^i,Y^{i-1})-Y_{i+1}^n$. As stated earlier, the random variable $T$ in (b) is independent of $(M,X^n,Y^n,Z^n,\hat{Z}^n)$ and uniformly distributed in $\{1,2,\hdots,n\}$. \\
}

{\it Single-letterizing \eqref{eq:1r-producing-output}:}
{\allowdisplaybreaks
\begin{align*}
0 &= I(\hat{Z}^n;X^n|M,Y^n) \\
&= H(X^n|M,Y^n) - H(X^n|M,Y^n,\hat{Z}^n) \\
&= \sum_{i=1}^n [H(X_i|X^{i-1},M,Y^n) - H(X_i|X^{i-1},M,Y^n,\hat{Z}^n)] \\
&\geq \sum_{i=1}^n [H(X_i|X^{i-1},M,Y^n) - H(X_i|X^{i-1},M,Y^n,\hat{Z}_i)] \\
&= \sum_{i=1}^n I(X_i;\hat{Z}_i|\underbrace{X^{i-1},M,Y^{i-1},Y_{i+1}^n}_{=\ U_i},Y_i) \\
&= \sum_{i=1}^n I(X_i;\hat{Z}_i|U_i,Y_i) \\
&= n\cdot I(X_T;\hat{Z}_T|U,Y_T)
\end{align*}
Last equality follows from the similar reasoning as before. \\
}

{\it Single-letterizing \eqref{eq:1r-privacy}:}
{\allowdisplaybreaks
\begin{align*}
n\epsilon &\geq I(M;X^n|Y^n,\hat{Z}^n) \\
&= H(X^n|Y^n,\hat{Z}^n) - H(X^n|M,Y^n,\hat{Z}^n) \\
&\stackrel{\text{(a)}}{\geq} H(X^n|Y^n,Z^n) - n\epsilon' - H(X^n|M,Y^n,\hat{Z}^n) \\
&= \sum_{i=1}^n [H(X_i|Y_i,Z_i) - H(X_i|X^{i-1},M,Y^n,\hat{Z}^n)] - n\epsilon' \\
&\geq \sum_{i=1}^n [H(X_i|Y_i,Z_i) - H(X_i|X^{i-1},M,Y^n,\hat{Z}_i)] - n\epsilon' \\
&\stackrel{\text{(b)}}{\geq} \sum_{i=1}^n [H(X_i|Y_i,\hat{Z}_i) -\epsilon'' - H(X_i|\underbrace{X^{i-1},M,Y^{i-1},Y_{i+1}^n}_{=\ U_i},Y_i,\hat{Z}_i)] - n\epsilon' \\
&= \sum_{i=1}^n [I(U_i;X_i|Y_i,\hat{Z}_i) - \epsilon' - \epsilon''] \\
&= \sum_{i=1}^n [I(U_i;X_i|Y_i,\hat{Z}_i) - \epsilon' - \epsilon''] \\
&\stackrel{\text{(c)}}{=} n[I(U_T;X_T|Y_T,\hat{Z}_T,T) -\epsilon' - \epsilon''] \\
&= n[I(U_T,T;X_T|Y_T,\hat{Z}_T) - I(T;X_T|Y_T,\hat{Z}_T) -\epsilon' - \epsilon''] \\
&\stackrel{\text{(d)}}{\geq} n[I(U_T,T;X_T|Y_T,\hat{Z}_T) -\epsilon''' - \epsilon' - \epsilon'']. \\
\end{align*}
We used the following fact in (a),(b), and (d): if $V$ and $V'$ are two random variables taking values in the same alphabet $\mathcal{V}$ such that $\|p_V-p_{V'}\|_1\leq\epsilon\leq1/4$, then it follows from \cite[Lemma 2.7]{CsiszarKorner11} that $|H(V)-H(V')|\leq \eta\log|\mathcal{V}|$, where $\eta\to0$ as $\epsilon\to0$.
Now \eqref{eq:1r-correctness} implies (a), i.e., $H(X^n|Y^n,\hat{Z}^n)\geq H(X^n|Y^n,Z^n)-n\epsilon'$, where $\epsilon'\to0$ as $\epsilon\to0$.
Note that \eqref{eq:1r-correctness} implies $\|p_{X_iY_i\hat{Z}_i}-p_{X_iY_iZ_i}\|_1\leq\epsilon$, for every $i\in[n]$, which implies (b), i.e., $H(X_i|Y_i,\hat{Z}_i)\geq H(X_i|Y_i,Z_i)-\epsilon_{i}''$, $i\in[n]$, where $\epsilon_{i}''\to0$ as $\epsilon\to0$; take $\epsilon''=\max_{i\in[n]}\{\epsilon_{i}''\}$.
(c) follows from the similar reasoning as before, where the random variable $T$ is independent of $(M,X^n,Y^n,Z^n,\hat{Z}^n)$ and is uniformly distributed in $\{1,2,\hdots,n\}$.
Inequality (d), i.e., $I(T;X_T|Y_T,\hat{Z}_T)\leq\epsilon'''$, with $\epsilon'''\to0$ as $\epsilon\to0$, can be shown along the lines of the proof of \Claimref{converse-small-claim}. So we get $I(U;X_T|Y_T,\hat{Z}_T)\leq \delta'$, where $\delta'=\epsilon+\epsilon'+\epsilon''+\epsilon'''$ and $\delta'\to0$ as $\epsilon\to0$. \\
}

{\it Single-letterizing \eqref{eq:1r-correctness}:} Below we prove that $\|p_{X^nY^n\hat{Z}^n}-p_{X^nY^nZ^n}\|_1\leq \epsilon \implies \|p_{X_iY_i\hat{Z}_i}-p_{X_iY_iZ_i}\|_1\leq \epsilon$, for every $i\in[n]$. This implies that 
\begin{align}
\|p_{X_TY_T\hat{Z}_T}-p_{X_TY_TZ_T}\|_1\leq \epsilon \text{ for a random }T. \label{eq:single_correctness}
\end{align}
Consider an arbitrary $i\in[n]$.
{\allowdisplaybreaks
\begin{align}
&\|p_{X_iY_i\hat{Z}_i}-p_{X_iY_iZ_i}\|_1 \notag \\
&= \sum_{x,y,z}|p_{X_iY_i\hat{Z}_i}(x,y,z)-p_{X_iY_iZ_i}(x,y,z)| \notag \\
&= \sum_{x,y,z}|\displaystyle \sum_{\substack{x^n,y^n,z^n:\\x_i=x,\\y_i=y,\\z_i=z}}p_{X^nY^n\hat{Z}^n}(x^n,y^n,z^n)-p_{X^nY^nZ^n}(x^n,y^n,z^n)| \notag \\
&\stackrel{\text{(a)}}{\leq} \sum_{x^n,y^n,z^n}|p_{X^nY^n\hat{Z}^n}(x^n,y^n,z^n)-p_{X^nY^nZ^n}(x^n,y^n,z^n)| \notag \\
&= \|p_{X^nY^n\hat{Z}^n}-p_{X^nY^nZ^n}\|_1 \notag \\
&\leq \epsilon, \label{eq:single_correctness-i}
\end{align}
where (a) follows from the triangle inequality. \\
}

\begin{proof}[Explanation of \eqref{eq:1r-rate-interim6}] 
For a fixed $(p_{XY},p_{Z|XY})$, let us define the following function:
\begin{align*}
R_{\epsilon}(p_{XY},p_{Z|XY}) := \min_{\substack{p_{U\hat{Z}|XY}: \\ I(U;Y|X)\leq\epsilon \\ I(U;X|Y,\hat{Z})\leq\epsilon \\ I(\hat{Z};X|U,Y)\leq\epsilon \\ \|p_{XY\hat{Z}}-p_{XYZ}\|_1\leq\epsilon}} I(X,\hat{Z};U|Y).
\end{align*}
Note that the expression in \eqref{eq:1r-rate-interim5} can be lower-bounded by $R_{\epsilon}(p_{XY},p_{Z|XY})-\delta$, because we relax two of the Markov chains in the definition of $R_{\epsilon}$. 
To prove the inequality in \eqref{eq:1r-rate-interim6}, it suffices to show that the function $R_{\epsilon}(p_{XY},p_{Z|XY})$ is right continuous at $\epsilon=0$. This can be proved (using continuity of mutual information and continuity of $L_1$-norm) along the lines of a similar inequality (inequality (14)) in \cite{Data16}.
\end{proof}

\section{Details omitted from \Subsectionref{full-support}}\label{app:characterization}
\begin{proof}[Proof of \Claimref{disjoint-messages}]
We prove this by contradiction. Suppose $\vec{\alpha}_i^{(y)}\neq\vec{\alpha}_j^{(y')}$ and $\U_{i}^{(y)}\cap\U_{j}^{(y')}\neq\phi$. Let $u\in\U_{i}^{(y)}\cap\U_{j}^{(y')}$.
Define two sets $Supp(\vec{\alpha}_i^{(y)})=\{x\in\X: \vec{\alpha}_i^{(y)}(x)>0\}$ and $Supp(\vec{\alpha}_j^{(y')})=\{x\in\X: \vec{\alpha}_j^{(y')}(x)>0\}$. We analyze two cases, one when these two sets are equal, and the other, when they are not.

{\bf Case 1.} $Supp(\vec{\alpha}_i^{(y)})=Supp(\vec{\alpha}_j^{(y')})$: 
since $\vec{\alpha}_i^{(y)}$ and $\vec{\alpha}_j^{(y')}$ are distinct probability vectors with same support, there must exist $x$ and $x'$ such that $\vec{\alpha}_i^{(y)}(x)>0$, $\vec{\alpha}_j^{(y')}(x)>0$, $\vec{\alpha}_i^{(y)}(x')>0$, $\vec{\alpha}_j^{(y')}(x')>0$, and $\frac{\vec{\alpha}_i^{(y)}(x')}{\vec{\alpha}_i^{(y)}(x)}\neq\frac{\vec{\alpha}_j^{(y')}(x')}{\vec{\alpha}_j^{(y')}(x)}$.
Since $\vec{\alpha}_i^{(y)}(x)>0, \vec{\alpha}_i^{(y)}(x')>0$, we have $p_{Z|XY}(z|x,y)>0$ and $p_{Z|XY}(z|x',y)>0$ for every $z\in\Z_{i}^{(y)}$ . Similarly, since $\vec{\alpha}_j^{(y')}(x)>0,\vec{\alpha}_j^{(y')}(x')>0$, we have $p_{Z|XY}(z'|x,y')>0$ and $p_{Z|XY}(z'|x',y')>0$ for every $z'\in\Z_{j}^{(y')}$.
Note that $\frac{\vec{\alpha}_i^{(y)}(x')}{\vec{\alpha}_i^{(y)}(x)}=\frac{p_{Z|XY}(z|x',y)}{p_{Z|XY}(z|x,y)}$ and $\frac{\vec{\alpha}_j^{(y')}(x')}{\vec{\alpha}_j^{(y')}(x)}=\frac{p_{Z|XY}(z'|x',y')}{p_{Z|XY}(z'|x,y')}$, where, by hypothesis, $\frac{p_{Z|XY}(z|x',y)}{p_{Z|XY}(z|x,y)}\neq\frac{p_{Z|XY}(z'|x',y')}{p_{Z|XY}(z'|x,y')}$.

Since $u\in\U_{i}^{(y)}$, there exists $z\in\Z_{i}^{(y)}$ such that $p(u,z|y)>0$. This implies -- by privacy against Bob -- that $p(u,z|x,y)>0$ and $p(u,z|x',y)>0$.
Consider $p_{UZ|XY}(u,z|x,y)$ and expand it as follows:
\begin{align}
p_{UZ|XY}(u,z|x,y) &= p_{Z|XY}(z|x,y)p_{U|XYZ}(u|x,y,z) \notag \\
&= p_{Z|XY}(z|x,y)p_{U|YZ}(u|y,z) \label{eq:disjoint-messages-interim1}
\end{align}
Since $p_{UZ|XY}(u,z|x,y)>0$, all the terms above are non-zero and well-defined. In the last equality we used privacy against Bob to write $p_{U|XYZ}(u|x,y,z)=p_{U|YZ}(u|y,z)$. Now expand $p_{UZ|XY}(u,z|x,y)$ in another way as follows:
\begin{align}
p_{UZ|XY}(u,z|x,y) &= p_{U|XY}(u|x,y)p_{Z|UXY}(z|u,x,y) \notag \\
&= p_{U|XY}(u|x)p_{Z|UY}(z|u,y) \label{eq:disjoint-messages-interim2}
\end{align}
Again, all the terms above are non-zero and well-defined because $p_{UZ|XY}(u,z|x,y)>0$. We used $U-X-Y$ to write $p_{U|XY}(u|x,y)=p_{U|X}(u|x)$ and $Z-(U,Y)-X$ to write $p_{Z|UXY}(z|u,x,y)=p_{Z|UY}(z|u,y)$ in \eqref{eq:disjoint-messages-interim2}. Now comparing \eqref{eq:disjoint-messages-interim1} and \eqref{eq:disjoint-messages-interim2} gives the following:
\begin{align}
p_{Z|XY}(z|x,y)p_{U|YZ}(u|y,z)=p_{U|X}(u|x)p_{Z|UY}(z|u,y) \label{eq:disjoint-messages-interim3}
\end{align}
Since $p_{UZ|XY}(u,z|x',y)>0$, we can apply the same arguments as above with $p_{UZ|XY}(u,z|x',y)$ and get the following:
\begin{align}
p_{Z|XY}(z|x',y)p_{U|YZ}(u|y,z)=p_{U|X}(u|x')p_{Z|UY}(z|u,y) \label{eq:disjoint-messages-interim4}
\end{align}
Note that all the terms on both sides of \eqref{eq:disjoint-messages-interim3} and \eqref{eq:disjoint-messages-interim4} are non-zero. Dividing \eqref{eq:disjoint-messages-interim3} by \eqref{eq:disjoint-messages-interim4} gives the following:
\begin{align}
\frac{p_{Z|XY}(z|x,y)}{p_{Z|XY}(z|x',y)} = \frac{p_{U|X}(u|x)}{p_{U|X}(u|x')}. \label{eq:first-ratio}
\end{align}
Similarly, since $u\in\U_{j}^{(y')}$, there exists $z'\in\Z_{j}^{(y')}$ such that $p(u,z'|y')>0$. This implies -- by privacy against Bob -- that $p(u,z'|x,y')>0$ and $p(u,z'|x',y')>0$. Applying the above arguments with $p(u,z'|x,y')>0$ and $p(u,z'|x',y')>0$ gives
\begin{align}
\frac{p_{Z|XY}(z'|x,y')}{p_{Z|XY}(z'|x',y')}=\frac{p_{U|X}(u|x)}{p_{U|X}(u|x')}. \label{eq:second-ratio}
\end{align}
Comparing \eqref{eq:first-ratio} and \eqref{eq:second-ratio} gives $\frac{p_{Z|XY}(z|x,y)}{p_{Z|XY}(z|x',y)}=\frac{p_{Z|XY}(z'|x,y')}{p_{Z|XY}(z'|x',y')}$, which is a contradiction.\\
{\bf Case 2.} $Supp(\vec{\alpha}_i^{(y)})\neq Supp(\vec{\alpha}_j^{(y')})$: 
assume, without loss of generality, that $Supp(\vec{\alpha}_i^{(y)})\setminus Supp(\vec{\alpha}_j^{(y')})\neq\phi$. Let $x\in Supp(\vec{\alpha}_i^{(y)})\setminus Supp(\vec{\alpha}_j^{(y')})$. This implies that $\vec{\alpha}_i^{(y)}(x)>0$ and $\vec{\alpha}_j^{(y')}(x)=0$. Since $Supp(\vec{\alpha}_j^{(y')})\neq \phi$ (because $\vec{\alpha}_j^{(y')}$ is not a zero vector), there exists $x'\in Supp(\vec{\alpha}_j^{(y')})$ such that $\vec{\alpha}_j^{(y')}(x')>0$.
Note that $\vec{\alpha}_i^{(y)}(x)>0$ implies $p_{Z|XY}(z|x,y)>0$ for every $z\in\Z_{i}^{(y)}$; $\vec{\alpha}_j^{(y')}(x)=0$  implies $p_{Z|XY}(z'|x,y')=0$ for every $z'\in\Z_{j}^{(y')}$; and $\vec{\alpha}_j^{(y')}(x')>0$ implies $p_{Z|XY}(z'|x',y')>0$ for every $z'\in\Z_{j}^{(y')}$.
Since $u\in\U_{i}^{(y)}\cap\U_{j}^{(y')}$, there exists $z\in\Z_{i}^{(y)}$ and $z'\in\Z_{j}^{(y')}$ such that $p(u,z|y)>0$ and $p(u,z'|y')>0$. 
These imply -- by privacy against Bob -- that $p(u,z|x,y)>0$ and $p(u,z'|x',y')>0$.
Now consider $p(u,z'|x,y')$ and expand it as follows:
\begin{align}
p(u,z'|x,y') &= p(u|x,y')p(z'|x,y',u) \label{eq:second-comparison1} \\
&= p(u|x)p(z'|y',u) \label{eq:second-comparison2}
\end{align}
We used the Markov chain $U-X-Y$ to write $p(u|x,y')=p(u|x)$, where $p(u|x)>0$ because $p(u,z|x,y)>0$. We used the Markov chain $Z-(Y,U)-X$ to write $p(z'|x,y',u)=p(z'|y',u)$ in \eqref{eq:second-comparison1}, which is greater than zero because $p(u,z'|x',y')>0$. Putting all these together in \eqref{eq:second-comparison2} gives $p(u,z'|x,y')>0$, which implies $p_{Z|XY}(z'|x,y')>0$, a contradiction.
\end{proof}
\vspace{0.25cm}
\begin{proof}[Proof of \Claimref{equal-alpha-vectors}]
Since the Markov chain $U-X-Y$ holds, we have that the set of messages that Alice sends to Bob are same for all inputs of Bob. Now it follows from \Claimref{disjoint-messages} that for every $y\in\Y$ we can partition the set of all possible messages $\U$ as follows: $\U=\U_1^{(y)}\biguplus\U_2^{(y)}\biguplus\hdots\biguplus\U_{k(y)}^{(y)}$, where $\U_i^{(y)}$ is as defined earlier in \eqref{eq:one-round_set-msgs}. Consider any two $y,y'\in\Y$. We have
\begin{align}
\U=\U_1^{(y)}\biguplus\U_2^{(y)}\biguplus\hdots\biguplus\U_{k(y)}^{(y)}=\U_1^{(y')}\biguplus\U_2^{(y')}\biguplus\hdots\biguplus\U_{k(y')}^{(y')}.\label{eq:equal-partition_messages}
\end{align}
First observe that $k(y)=k(y')$. Otherwise, there exists $i\in[k(y)]$ and $j\in [k(y')]$ such that $\U_i^{(y)}\cap \U_j^{(y')}\neq\phi$ and $\vec{\alpha}_i^{(y)}\neq \vec{\alpha}_j^{(y')}$, which contradicts \Claimref{disjoint-messages}. From now on we denote $k(y)$ by $k$ for every $y\in\Y$.

Suppose $\{\vec{\alpha}_1^{(y)},\vec{\alpha}_2^{(y)},\hdots,\vec{\alpha}_{k}^{(y)}\}\neq\{\vec{\alpha}_1^{(y')},\vec{\alpha}_2^{(y')},\hdots,\vec{\alpha}_{k}^{(y')}\}$. It means that there exists an $i\in [k]$ such that $\vec{\alpha}_i^{(y)}\notin\{\vec{\alpha}_1^{(y')},\vec{\alpha}_2^{(y')},\hdots,\vec{\alpha}_{k}^{(y')}\}$. Since $\U_i^{(y)}$ is associated with $\vec{\alpha}_i^{(y)}$, we have by \eqref{eq:equal-partition_messages} that $\U_i^{(y)}\notin\U_1^{(y')}\biguplus\U_2^{(y')}\biguplus\hdots\biguplus\U_{k}^{(y')}$. This contradicts \eqref{eq:equal-partition_messages}.
\end{proof}

\vspace{0.25cm}
\begin{proof}[Proof of \Theoremref{one-round-characterization}]
$\Rightarrow$: (1) has been shown in \Claimref{equal-alpha-vectors}. (2) follows from the following argument:
take any $x\in\X$ and consider the following set of equalities:
\begin{align*}
\sum_{z\in\Z_{i}^{(y)}}p_{Z|XY}(z|x,y) \stackrel{\text{(a)}}{=} \sum_{u\in\U_i} p_{U|XY}(u|x,y) 
\stackrel{\text{(b)}}{=} \sum_{u\in\U_i} p_{U|XY}(u|x,y') 
\stackrel{\text{(c)}}{=} \sum_{z\in\Z_{i}^{(y')}}p_{Z|XY}(z|x,y'),
\end{align*}
where (a) and (c) follow from the fact that Bob's output $Z$ belongs to $\Z_{i}^{(y)}$, when his input is $y$ (or belongs to $\Z_{i}^{(y')}$, when his input is $y'$) if and only if the message $U$ that Alice sends to Bob belongs to $\U_{i}$. (b) follows from the Markov chain $U-X-Y$. \\

$\Leftarrow$: we show this direction by giving a secure protocol in \Figureref{one-round-protocol}. Now we prove that this protocol is perfectly secure, i.e., it satisfies perfect correctness and perfect privacy. \\

\begin{figure}
\hrule
\vspace{0.20cm}
{\bf One-round secure protocol}
\vspace{0.20cm}
\hrule
\vspace{0.25cm}
\noindent {\bf Input:} Alice has $x\in\X$ and Bob has $y\in\Y$. \\
\noindent {\bf Output:} Bob outputs $z$ with probability $p_{Z|XY}(z|x,y)$.
\begin{center}{\bf Protocol}\end{center}
\begin{enumerate}
\item Both Alice and Bob agree on an element in the alphabet $\Y$, say $y_1$, beforehand.
\item Alice sends $u_i$ to Bob with probability $\sum_{z\in\Z_i^{(y)}}p_{Z|XY}(z|x,y_1)$.
\item Upon receiving $u_i$, Bob fixes any element $x'$ for which $\alpha_i^{(y)}(x')>0$, and outputs $z$ with probability
$\frac{p_{Z|XY}(z|x',y)}{\sum_{z\in\Z_{i}^{(y)}}p_{Z|XY}(z|x',y)}$.
\end{enumerate}
\hrule
\vspace{0.20cm}
\caption{A one-round secure protocol.}
\label{fig:one-round-protocol}
\end{figure}
{Correctness:} suppose the protocol of \Figureref{one-round-protocol} produces an output according to the p.m.f. $p(z|x,y)$. We show below that $p(z|xy)$ is equal to $p_{Z|XY}(z|x,y)$.
\begin{align}
p(z|x,y) &= p(z,u_i|x,y) \label{eq:char_corr-interim1} \\
&= p(u_i|x,y)\times p(z|u_i,x,y) \notag \\
&= p(u_i|x)\times p(z|u_i,y) \label{eq:char_corr-interim2} \\
&= \Big(\sum_{z\in\Z_i^{(y)}}p_{Z|XY}(z|x,y_1)\Big)\times\Big(\frac{p_{Z|XY}(z|x',y)}{\sum_{z\in\Z_{i}^{(y)}}p_{Z|XY}(z|x',y)}\Big) \label{eq:char_corr-interim3} \\
&= \Big(\sum_{z\in\Z_i^{(y)}}p_{Z|XY}(z|x,y)\Big)\times\Big(\frac{p_{Z|XY}(z|x,y)}{\sum_{z\in\Z_{i}^{(y)}}p_{Z|XY}(z|x,y)}\Big) \label{eq:char_corr-interim4} \\
&= p_{Z|XY}(z|x,y) \notag
\end{align}
In \eqref{eq:char_corr-interim1} we assume that $z\in\Z_i^{(y)}$; and \eqref{eq:char_corr-interim1} is an equality because the message that Alice sends to Bob is a deterministic function of Bob's input and output. In \eqref{eq:char_corr-interim2} we used the Markov chain $U-X-Y$ to write $p(u_i|x,y)=p(u_i|x)$ and $Z-(U,Y)-X$ to write $p(z|u_i,x,y)=p(z|u_i,y)$. In \eqref{eq:char_corr-interim3} we substituted the values of $p(u_i|x,y)$ and $p(z|u_i,y)$ from the protocol of \Figureref{one-round-protocol}. In \eqref{eq:char_corr-interim4} we used the facts that $\sum_{z\in\Z_i^{(y)}}p_{Z|XY}(z|x,y_1)=\sum_{z\in\Z_i^{(y)}}p_{Z|XY}(z|x,y)$ (which follows from the assumption -- see the second item in the theorem statement) and $\frac{p_{Z|XY}(z|x',y)}{\sum_{z\in\Z_{i}^{(y)}}p_{Z|XY}(z|x',y)}=\frac{p_{Z|XY}(z|x,y)}{\sum_{z\in\Z_{i}^{(y)}}p_{Z|XY}(z|x,y)}$ (which follows from the fact that the matrix $A_i^{(y)}=\vec{\alpha}_i^{(y)}\times\vec{\gamma}_i^{(y)}$, defined earlier, is a rank-one matrix). \\

{Privacy:} we need to show that if $p_{Z|XY}(z|x_1,y)>0$ and $p_{Z|XY}(z|x_2,y)>0$ for some $x_1,x_2,y,z$, then for every $u_i$, $p(u_i|x_1,y,z)=p(u_i|x_2,y,z)$ must hold. This follows because $p(u_i|x,y,z)=\mathbbm{1}_{\{z\in\Z_i^{(y)}\}}$, i.e., $u_i$ is a deterministic function of $(y,z)$ and is independent of Alice's input.
\end{proof}
\vspace{0.25cm}

\begin{proof}[Proof of \Theoremref{rate_full-support}]
\Lemmaref{as-iff-ps} states that $(p_{XY},p_{Z|XY})$ is computable with asymptotic security if and only if it is computable with perfect security, i.e., $(p_{XY},p_{Z|XY})$ is computable with asymptotic security if and only if there exists $p(u,x,y,z)=p_{XYZ}(x,y,z)p(u|x,y,z)$ that satisfies \eqref{eq:ps_correctness}-\eqref{eq:ps_producing-output}). To prove this theorem, we use the characterization of securely computable $(p_{XY},p_{Z|XY})$ with perfect security (where $p_{XY}$ has full support) from \Theoremref{one-round-characterization}. First we give achievability and then prove converse. Although the general achievability of \Theoremref{as_rate} is applicable here, we give a more direct achievability proof based on the Slepian-Wolf coding scheme, with a much simpler argument of its correctness and privacy. \\

{\bf Achievability:} Alice and Bob get $X^n$ and $Y^n$, respectively, as their inputs, where $(X_i,Y_i)$ are distributed i.i.d. according to $p_{XY}$. Alice passes $X^n$ through the virtual DMC $p_{W|X}$ and obtains an i.i.d. sequence $W^n$, where $p_{W|X}$ is obtained from \eqref{eq:W-interim2}. Note that if Bob gets access to this $W^n$ sequence exactly, and he outputs an i.i.d. sequence $Z^n$ by passing $(W^n,Y^n)$ through the virtual DMC $p_{Z|WY}$, where $p_{Z|WY}$ is obtained from \eqref{eq:W-interim2}, then perfect correctness is achieved, i.e., Bob's output $Z^n$ will be distributed exactly according to $\Pi_{i=1}^np_{Z|XY}(z_i|x_i,y_i)$. We show below that conveying $W^n$ to Bob also maintains privacy against Bob.
So, the goal is for Alice to convey this $W^n$ sequence to Bob reliably. Since $W^n$ may be correlated with $Y^n$, Alice can use the Slepian-Wolf coding scheme to convey $W^n$ to Bob at a rate $H(W|Y)$. Now we show that this scheme is asymptotically secure, i.e., it satisfies asymptotic correctness \eqref{eq:correctness_S} and asymptotic privacy \eqref{eq:privacy-bob}. In the analysis below, $E$ is an indicator random variable which is equal to 1 if Bob recovers the intended $W^n$ exactly, and 0 otherwise.  Let $\epsilon':=\Pr\{E=1\}$, then by the correctness of Slepian-Wolf coding scheme we have that $\epsilon'\to0$ as $n\to\infty$. Note that if $E=0$, i.e., Bob recovers the intended $W^n$ exactly, then $p_{\hat{Z}^n|X^nY^n}=p_{Z^n|X^nY^n}$. \\

{Correctness:} Note that Bob makes an error only when he does not decode the intended $W^n$ sequence exactly; and that happens with probability $\epsilon'$, which goes to 0 as $n$ tends to infinity. This implies that our scheme satisfies $\|p_{X^nY^n\hat{Z}^n}-p_{X^nY^nZ^n}\|_1\to0$ as $n\to\infty$. \\

{Privacy:} We show that our scheme satisfies $I(M;X^n|Y^n,\hat{Z}^n)\leq n\epsilon''$, where $\epsilon''\to0$ as $n\to\infty$.
\begin{align}
I(M;X^n|Y^n,\hat{Z}^n) &= I(M;X^n|Y^n,\hat{Z}^n,\hat{W}^n(Y^n,\hat{Z}^n)) \label{eq:achieve-privacy-interim1} \\
&\leq I(M;X^n|Y^n,\hat{Z}^n,\hat{W}^n(Y^n,\hat{Z}^n),E) + H(E) \label{eq:achieve-privacy-interim2} \\
&= \epsilon'\times I(M;X^n|Y^n,\hat{Z}^n,\hat{W}^n(Y^n,\hat{Z}^n),E=1) \notag \\ 
&\qquad + (1-\epsilon')\times \underbrace{I(M;X^n|Y^n,\hat{Z}^n,\hat{W}^n(Y^n,\hat{Z}^n),E=0)}_{=\ I(M;X^n|Y^n,\hat{Z}^n,W^n,E=0)\ =\ 0} + H_2(\epsilon') \label{eq:achieve-privacy-interim3} \\
&\leq \epsilon'\times n \times \log|\X| + H_2(\epsilon') \label{eq:achieve-privacy-interim4} \\
&\leq n\epsilon'', \qquad \text{ where }\epsilon''\to0 \text{ as } n\to\infty.  \notag
\end{align}
In our achievability scheme, Bob first decodes $\hat{W}^n$ from $(M,Y^n)$ and then samples $\hat{Z}^n$. Note that the same $\hat{W}^n$ can be recovered from $(Y^n,\hat{Z}^n)$. Let $\hat{W}:\Y\times\hat{Z}\to[k]$ -- note that $[k]$ is the alphabet of $W$ -- be the function by which we recover $\hat{W}$ from $(Y,\hat{Z})$.
In \eqref{eq:achieve-privacy-interim1}, the $\hat{W}^n(Y^n,\hat{Z}^n)$ vector is equal to the $\hat{W}^n$ vector that Bob decodes from $(Y^n,\hat{Z}^n)$. In \eqref{eq:achieve-privacy-interim2}, we used the following information inequality: $I(A;B|C)\leq I(A;B|C,D) + H(D)$ for any random variables $A,B,C,D$, which can be proved as follows:
\begin{align*}
I(A;B|C) &\leq I(A;B,D|C) \\
&\leq I(A;D|C)+I(A;B|C,D) \\
&\leq H(D) + I(A;B|C,D).
\end{align*}
In \eqref{eq:achieve-privacy-interim3}, $I(M;X^n|Y^n,\hat{Z}^n,W^n,E=0)=0$, because $M$ is a determinisitc function of $W^n$; and $H_2(.)$ denotes the binary entropy function. In \eqref{eq:achieve-privacy-interim4} we used $I(M;X^n|Y^n,\hat{Z}^n,\hat{W}^n(Y^n,\hat{Z}^n),E=1)\leq \log |\X|^n$. Since $\X$ is finite, $\epsilon''\to0$ as $\epsilon'\to0$. \\

{\bf Converse:}
If $p_{XY}$ has full support, then we can simplify the expression in \eqref{eq:1r-rate-interim6}. Note that $p_{UZ|XY}$ in \eqref{eq:1r-rate-interim6} satisfies \eqref{eq:ps_correctness}-\eqref{eq:ps_privacy}, which means that $p_{UZ|XY}$ computes $(p_{XY},p_{Z|XY})$ with perfect security.
This implies that the random variable $W$ is a function of $U$ as well as of $(Y,Z)$ (see the comments before \Theoremref{rate_full-support}). Now, we can lower-bound the objective function in \eqref{eq:1r-rate-interim6} by $H(W|Y)$ as follows: $I(Z;U|Y)= I(Z,W;U,W|Y)\geq H(W|Y)$. Note that $U=W$ satisfies all the Markov chains in \eqref{eq:1r-rate-interim6}. Thus, if $p_{XY}$ has full support, then we can identify $U$ as $W$.
\end{proof}

\section{Proof of \Theoremref{rate_no-privacy}}\label{app:proof_no-privacy}
As stated in the beginning of \Sectionref{proof_general-case}, the cardinality bound of $|\U|\leq|\X|\cdot|\Y|\cdot|\Z|+2$ follows from the Fenchel-Eggleston's strengthening of Carath\'eodory's theorem \cite[pg. 310]{CsiszarKorner11}. \\

{\bf Converse:} For every $\epsilon>0$, there is a large enough $n$ and a code $\C_n$ that satisfies the following properties:
\begin{align}
& M - X^n - Y^n, \notag \\
& \hat{Z}^n - (M,Y^n) - X^n, \notag \\
& \|p_{X^nY^n\hat{Z}^n}-p_{X^nY^nZ^n}\|_1\leq \epsilon. \notag
\end{align}
Now we follow the converse of \Theoremref{as_rate} exactly (except for the privacy part) up to \eqref{eq:1r-rate-interim6}, which gives the following:
\begin{align*}
R&\geq \displaystyle \min_{\substack{p_{U|XYZ}: \\ U-X-Y \\ Z-(U,Y)-X \\}} I(X,Z;U|Y) - \epsilon_4 - \delta,
\end{align*}
where $\epsilon_4+\delta\to0$ as $n\to\infty$. \\

{\bf Achievability:} This is simpler than the achievability of \Theoremref{as_rate} as we do not need to worry about privacy here. We follow the proof of achievability of \Theoremref{as_rate} (given in \Appendixref{achievability}) with two modifications: (i) $p_{UXYZ}$ does not need to satisfy the Markov chain $U-(Y,Z)-X$ in protocol A, and (ii) we do not need to show that $\hat{p}(x^n,y^n,z^n,m|f)$ satisfies privacy condition in part (3). With this we have the following constraints on rates $R'$ and $R_{M}$ (similar to \eqref{eq:part2-constraint2} and \eqref{eq:part2-constraint3}):
\begin{align}
R' &< H(U|X,Y,Z), \label{eq:f-rate_noPrivacy} \\
R'+R_{M} &> H(U|Y). \label{eq:sum-rate_noPrivacy}
\end{align} 
Now it follows that, for every $R_M>I(X,Z;U|Y)$, there exists $R'>0$ such that $R'<H(U|X,Y,Z)$ and $R_M+R'>H(U|Y)$ hold, which implies existence of a coding scheme by the above analysis with rate $R_M$.
We can define a set of appropriate encoder and decoder similar to how we defined them at the end of the achievability proof of \Theoremref{as_rate}.

\end{document}